\newcommand{\sizeH}{\ensuremath{p\times n}}
\def\Tr{\mathrm{Tr}}
\def\BTr{\mathrm{BTr}}
\def\Exp{\mathrm{Exp}}
\def\R{\mathbb{R}}
\def\Pn{\mathbb{P}_n}
\def\Pp{\mathbb{P}_p}
\def\Sn{\mathbb{S}_n}
\def\half{{1/2}}
\def\Z{\mathcal{Z}}
\def\Y{\mathcal{Y}}
\def\N{\mathcal{N}}
\def\thhat{\hat{\theta}}
\def\grad{\mathrm{grad}}
\def\Hess{\mathrm{Hess}}
\def\sym{\mathrm{sym}}
\def\fbar{\bar{f}}
\def\Jbar{\bar{J}}
\def\D{\mathrm{D}}
\def\Pr{\mathrm{Pr}}
\def\phat{\hat{P}}
\def\popt{\prescript{o}{}{P}}
\def\khat{\hat{K}}
\def\ktendsto{\xrightarrow{k\to\infty}}
\def\ptendsto{\xrightarrow[\mathcal{P}]{k\to\infty}}
\newcommand{\vech}[1][\cdot]{\mathrm{vech}(#1)}
\newcommand{\uvec}[1][\cdot]{\mathrm{vec}(#1)}
\newcommand{\I}{\bm{I}}
\newtheorem{thm}{Theorem}
\newtheorem{lem}[thm]{Lemma}
\newtheorem{assum}[thm]{Assumption}
\newtheorem{prop}[thm]{Proposition}
\newtheorem{defn}[thm]{Definition}
\newtheorem{rem}[thm]{Remark}
\title{Riemannian Trust-Region based Adaptive Kalman filter with unknown noise Covariance matrices}
\author{
 Rahul Moghe, Maruthi R. Akella, and Renato Zanetti \\
  Aerospace Engineering and Engineering Mechanics\\
  The University of Texas at Austin\\
  Austin, TX 78703
}
\begin{document}
\maketitle
\begin{abstract}
The problem of adaptive Kalman filtering for a discrete observable linear
time-varying system with unknown noise covariance matrices is addressed in
this paper. The measurement difference autocovariance method is used to
formulate a linear least squares cost function containing the measurements
and the process and measurement noise covariance matrices. Subsequently, a
Riemannian trust-region optimization approach is designed to minimize the
least squares cost function and ensure symmetry and positive definiteness
for the estimates of the noise covariance matrices. The noise covariance
matrix estimates, under sufficient excitation of the system, are shown to
converge to their unknown true values. Saliently, the exponential stability
and convergence guarantees for the proposed adaptive Kalman filter to the
optimal Kalman filter with known noise covariance matrices is shown to be
achieved under the relatively mild assumptions of uniform observability and
uniform controllability.  Numerical simulations on a linear time-varying
system demonstrate the effectiveness of the proposed adaptive filtering
algorithm.
\end{abstract}

\keywords{Adaptive Kalman filtering \and Manifold optimization \and Riemannian Trust-Region method \and Covariance Estimation}

\section{Introduction}
\label{sec:introduction}

The Kalman filter for linear systems plays a foundational role
for modern statistical estimation theory~\cite{kalman1960new, kalman1961new}.
One of the important assumptions that allows for a reliable state estimation
algorithm is that the covariance matrices (CM) for the process and measurement
noise signals entering the system are perfectly known. This assumption is
rarely true in practice given the difficulty associated with obtaining perfect
models and characterization for noise parameters. Traditionally, the CMs are
either predetermined through extensive experimentation or are artificially
inflated to adopt a conservative strategy. The case when inaccurate noise
covariances are used is known to cause filter
divergence~\cite{sangsuk1990discrete, heffes1966effect, madjarov1996kalman,
saab1995discrete, sangsuk1988continuous, willems1992divergence}.  These
challenges motivate adaptive algorithms for state estimation while
simultaneously estimating the noise CMs.

Over the years, the noise CM estimation methods can be broadly classified into
Bayesian methods~\cite{lainiotis1971optimal, magnant2016bayesian,
huang2017novel}, maximum likelihood methods~\cite{shumway1982approach,
axelsson2011ml, kashyap1970maximum}, correlation
methods~\cite{mehra1970identification, belanger1974estimation,
dunik2016autocovariance, odelson2006new, aakesson2008generalized}, covariance
matching methods~\cite{sage1969adaptive, myers1976adaptive,
rajamani2009estimation}, subspace methods~\cite{mussot2021noise}, and predictor
error methods~\cite{lennart1999system} to name a few. A recent survey describes
and compares noise CM estimation methods~\cite{dunik2017noise}. Of all these
various existing solutions, the correlation methods have received significant
attention because they require the least a priori information, have less
computational requirements, are known to produce unbiased estimates, and in
some formulations are independent of the state estimates. For that reason,
correlation methods have also been termed as feedback-free in
Ref.~\cite{dunik2017noise}.

Among the various implementations of correlation methods, the autocovariance
least squares (ALS) approach evaluates the autocovariance of the innovation
sequence which is linearly dependent on the noise
CM~\cite{odelson2006new, aakesson2008generalized}. The Measurement
Autocovariance Method (MACM) method similarly builds a linear equation in the
noise CMs by evaluating the correlation of a modified measurement model formed
by stacking measurements in time~\cite{zhou1995estimation}. This method was
also shown to estimate the cross-covariance of the noises~\cite{lee1980direct}.
Recently, the noise CM estimates found using a measurement difference
autocovariance (MDA) approach were shown to converge to their true values in
the mean squares sense for linear time-varying (LTV) systems with mild
requirements of observability of the system~\cite{dunik2018design}. The noise
CM estimation technique developed in this paper introduces certain judiciously
determined modifications to the MDA approach to ensure a stable adaptive Kalman
filter (AKF) formulation.

In spite of notable advances in the estimation of noise CM, the adaptive
filtering problem that uses the noise CM estimates to estimate the states has
not been fully addressed. Our previous work derived a convergent AKF for
detectable linear time-invariant (LTI) systems~\cite{moghe2019adaptive}. The
convergence of the noise covariance estimates and the state error covariance
were established based on the full rank condition of the coefficients of the
noise CMs in the linear equation obtained from the MACM method. Although the
MDA method~\cite{dunik2018design} produced convergent noise CM estimates for
LTV systems, the stability of the AKF formulated using these noise CM estimates
was not analyzed. It needs to be noted that LTV systems represent a larger
class of systems that are ubiquitous in engineering applications. Additionally,
nonlinear systems with and extended Kalman filter formulation are also
represented as linear time-varying systems. As a result, an exponentially
stable AKF formulation for LTV systems with unknown noise CM has potential for
positively impacting many applications and is the topic of this paper.

The stability of the AKF is a non-trivial problem. In most correlation methods,
a least squares formulation estimates the noise CMs by vectorizing the linear
equation formed using the MACM, ALS, and MDA approaches. A recursive version of
the least squares problem allows for estimating the noise CMs on line. In this
setting, although the estimates of the noise CM are guaranteed to be symmetric
at all times, there are no assurances on their positive definiteness. In fact,
during the transients, the noise CM can sometimes violate positive definiteness
which can potentially lead to inconsistencies in the AKF. In order to
circumvent this issue, Ref.~\cite{moghe2019adaptive} adopts a convenient
remedy, that is to revert back the CM estimate to the most recent (prior) value
when it was symmetric and positive definite (SPD). In spite of the poor
transient performance, the overall convergence result still holds with this
heuristic in place as the estimates are proved to be SPD in the limit with
probability one. However, this motivates the question whether or not the
adaptive estimator can be modified to guarantee the SPD property for the CM
estimates at all times.

In this paper, we provide a positive answer to the foregoing question by
adopting a differential geometric approach that ensures SPD noise CM estimates.
Geometric optimization methods have gained popularity in the past decade
because firstly, the conformity of the optimized values to the geometry of the
set over which they are optimized is guaranteed and secondly, convergence to
the optimal value is shown to be faster as compared to their Euclidean
counterparts~\cite{absil2009optimization, boumal2014manopt}. Specifically, the
Riemannian trust-region optimization methods provide superior convergence
assurances compared to most other geometric optimization techniques that have
been used in various aspects of identification
theory~\cite{menegaz2018unscented, sato2019riemannian, zhang2018feedback}. The
set of SPD matrices form a Riemannian manifold when endowed with an appropriate
metric. Given a cost function, the Riemannian versions of the gradient and
Hessian enable geometric optimization techniques to be applied that ensure SPD
matrix estimates.

In this paper, a Riemannian trust-region (RTR) optimization based adaptive
Kalman filter is introduced that minimizes the recursive least squares cost
function and estimates the unknown noise CMs while simultaneously estimating
the states. Exponential stability of the adaptive Kalman filter has been
established under the uniform observability and uniform controllability assumptions
of the system regardless of whether or not the noise CM estimates converge. To
the best of our knowledge, such a stable formulation of an adaptive Kalman
filter has never been proved. Under the persistence of excitation condition,
the RTR based noise CM estimates are shown to converge in probability to its
true value. Under this condition, the state error covariance sequence of the
adaptive Kalman filter is also shown to converge to the classical non-adaptive Kalman
filter employing known values of the noise CMs.

The paper is organized as follows. Section~\ref{sec:akf} presents the adaptive
Kalman filter formulation using the MDA method. The RTR-based noise CM
estimation method is described in Section~\ref{sec:rtr}. Convergence of the
RTR-based noise CM estimates to their true values and the stability of the
adaptive Kalman filter formulated using the RTR-based noise CM estimates is
discussed in Section~\ref{sec:stab}.  Section~\ref{sec:sim} presents numerical
simulations that demonstrate the effectiveness of the RTR-based adaptive Kalman
filter (RTRAKF). Finally, the paper is concluded in Section~\ref{sec:conc} and
future work is discussed.

\noindent\textit{Notation:} The set of symmetric, and symmetric and positive
definite matrices are denoted by $\Sn$, and $\Pn$ respectively. The identity
matrix of size $n$ is denoted by $\I_n$. The operator
$\vech:\Sn\to\R^{n(n+1)/2}$ returns a vector with the unique elements of a
symmetric matrix and the operator $\uvec:\R^{n\times n}\to\R^{n^2}$ returns a
vector of all the elements of a matrix. The Kronecker product, denoted by
$\otimes$, operates on a matrix product as $\uvec[A X B] = (B^T\otimes A)
\uvec[X]$. A modified Kronecker product taking one sided symmetry into account
for $X\in\Sn$ is given by $\uvec[A X B] = (B^T\otimes_u A)\vech[X]$. Another
modified Kronecker product taking two-sided symmetry into account for $X\in\Sn$
is given by $\vech[A X A^T] = (A\otimes_h A)\vech[X]$. The operator $\Exp$
denotes the matrix exponential and $\Tr\{\cdot\}$ denotes the Trace operation
on a square matrix.  The Block Trace operation, denoted by $\BTr\{\cdot\}$ is
defined as $\Tr\{A(\I_n\otimes B)\} = \Tr\{\BTr\{A\} B\}$ wherein
$A\in\R^{n^2\times n^2}$ and $B\in\R^{n\times n}$. The symbol $\ktendsto$
represents the convergence as $k\to\infty$, and $\ptendsto$ denotes the
convergence in probability as $k\to\infty$. The notation $\Pr\{\cdot\}$ denotes
the probability of occurrence of an event.

\section{Adaptive Filter Formulation}\label{sec:akf}

The basic structure of the adaptive filter formulated in this section follows
from correlation based techniques~\cite{moghe2019adaptive, dunik2018design, lee1980direct}.

\subsection{Problem Formulation}\label{sub:prob}

A discrete linear time-varying (LTV) system is considered here with the system equations
given by
\begin{equation}\label{eq:sys}
	\begin{split}
		x_{k+1} &= F_k x_k + G_k u_k + w_k \\
		y_k &= H_k x_k + v_k
	\end{split}
\end{equation}
wherein the process noise $w_k\sim\N (\bm{0}_{n\times 1}, Q)$ and the measurement noise
$v_k\sim\N(\bm{0}_{p\times 1}, R)$ are uncorrelated white Gaussian noises with
constant noise covariance matrices. Let $\phi_{i,k}$ be the associated state
transition matrix such that $\phi_{k+1,k} = F_k$ and $\phi_{k,l} = \phi_{k,q}\phi_{q,l}$ for any $q\in Z$. No restrictions are made for
the matrices $F_k$, $G_k$ and $H_k$ except for uniform observability of pair
$(F_k,H_k)$ and uniform controllability of the $(F_k, Q_k^\half)$ pair. To that
end, the definition for uniform observability and uniform controllability is
given by~\cite[Chapter 7.5]{jazwinski2007stochastic}
\begin{defn}\label{defn:obs}
	The pair $(F_k, H_k)$ is uniformly observable if there exists an integer $s
	\geq 0$ and constants $0 < \alpha_1 < \alpha_2$ such that
	\begin{equation}\label{eq:obsGrampos}
		\alpha_1 \I \preceq M_{k+s,k} \preceq \alpha_2 \I
	\end{equation}
	wherein,
	\begin{equation}\label{eq:obsgram}
		M_{k+s,k} = \sum\limits_{i=k}^{k+s} \phi_{i,k}^T H_i^TH_i\phi_{i,k}
	\end{equation}
	and let $M_{k+l,k}$ for $l<s$ be the partial observability Gramian.
\end{defn}
\begin{defn}\label{defn:con}
	The pair $(F_k, E_k)$ is uniformly controllable if there exists an integer
	$s\geq 0$ and constants $0 < \beta_1 < \beta_2$ such that
	\begin{equation}\label{eq:conGrampos}
		\beta_1 \I \preceq Y_{k+s,k} \preceq \beta_2 \I
	\end{equation}
	wherein,
	\begin{equation}\label{eq:congram}
		Y_{k+s,k} = \sum\limits_{i=k}^{k+s} \phi_{k+s+1,i+1} E_i E_i^T \phi_{k+s+1,i+1}^T
	\end{equation}
\end{defn}
\begin{assum}\label{as:obscon}
Let the pair $(F_k, H_k)$ be uniformly observable and the pair $(F_k,
Q_k^{\half})$ be uniformly controllable.
\end{assum}
The above assumption ensures the
Kalman filter is exponentially stable~\cite[Theorem
5.3]{anderson1981detectability}. Subsequently, the following assumption is made
on the $Q$ and $R$ noise covariance matrices.
\begin{assum}\label{as:covunknown}
	The noise covariance matrices $Q$ and $R$ are both assumed to be constant and unknown.
\end{assum}
The aim of this paper is to estimate the unknown $Q$ and $R$ matrices while
simultaneously estimating the states.

\subsection{Measurement Difference Autocovariance approach}\label{sub:mda}

Since the pair $(F_k,H_k)$ is uniformly controllable with constants $s\geq 0$
and $0 < \alpha_1 < \alpha_2$, consider $m\geq s$ measurements that are
aggregated in time to form a linear time series. Such a development is
described in the following result.
\begin{prop}
	For the LTV system given by Eq.~\eqref{eq:sys} with the
	Assumptions~\ref{as:covunknown}, and a $m\geq s$ from
	Definition~\ref{defn:obs}, the measurements $y_k$ of the system follow a
	linear time series given by
	\begin{equation}\label{eq:timeseries}
		\sum\limits_{i=0}^{m} A_i^k y_{k-i} - \sum\limits_{i=1}^{m} B_i^k G_{k-i} u_{k-i} = \sum\limits_{i=1}^{m} B_i^k w_{k-i} + \sum\limits_{i=0}^{m} A_i^k v_{k-i}
	\end{equation}
	wherein, the coefficients $A_i^k$ and $B_i^k$ are completely determined
	from the system matrices $F_k$ and $H_k$.
\end{prop}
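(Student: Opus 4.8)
The plan is to eliminate the unknown reference state from a sliding window of $m+1$ consecutive measurements, exploiting the redundancy that uniform observability provides. First I would propagate every state in the window $\{k-m,\dots,k\}$ back to the common epoch $x_{k-m}$ via the transition matrix,
\begin{equation}
x_{k-i} = \phi_{k-i,k-m}\, x_{k-m} + \sum_{j=k-m}^{k-i-1}\phi_{k-i,j+1}\bigl(G_j u_j + w_j\bigr),
\end{equation}
and substitute this into $y_{k-i}=H_{k-i}x_{k-i}+v_{k-i}$. Forming the linear combination $\sum_{i=0}^{m}A_i^k y_{k-i}$ then splits into three groups: a term multiplying the reference state $x_{k-m}$, the propagated control-plus-process-noise contributions, and the raw measurement noise $\sum_{i=0}^m A_i^k v_{k-i}$.

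The crucial step is to pick the coefficient matrices $A_i^k$ so that the reference state is annihilated, i.e.
\begin{equation}
\sum_{i=0}^{m} A_i^k H_{k-i}\,\phi_{k-i,k-m} = 0.
\end{equation}
Stacking the blocks $H_{k-i}\phi_{k-i,k-m}$ into a single matrix $\mathcal{O}$ of size $(m+1)p\times n$, this is exactly the requirement that the block row $[\,A_0^k\;\cdots\;A_m^k\,]$ lie in the left null space of $\mathcal{O}$. Under Assumption~\ref{as:obscon} with $m\geq s$, the observability Gramian satisfies $M_{k,k-m}=\mathcal{O}^{T}\mathcal{O}\succeq\alpha_1\I$, so $\mathcal{O}$ has full column rank $n$ and its left null space has dimension $(m+1)p-n$. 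The existence of a nontrivial annihilating $[A_0^k,\dots,A_m^k]$ therefore hinges on the measurement redundancy $(m+1)p>n$, which is the genuine obstacle: observability supplies the rank, but one must ensure the window is long enough that more measurement rows than states are available, and that the resulting $A_i^k$ are computable purely from $F_k$ and $H_k$ (as they are, since $\mathcal{O}$ is assembled from exactly these matrices).

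Once the $A_i^k$ are fixed, the remaining work is pure bookkeeping. I would reindex the double sum over the propagated inputs by setting $j=k-l$, collecting the coefficient of each $G_{k-l}u_{k-l}+w_{k-l}$ for $l=1,\dots,m$ and defining
\begin{equation}
B_l^k := \sum_{i=0}^{l-1} A_i^k H_{k-i}\,\phi_{k-i,k-l+1},
\end{equation}
which is again built solely from $F_k$ and $H_k$. The one subtlety in this step is the summation range: the input at time $k-l$ enters the propagation of $x_{k-i}$ only when $k-m\le k-l\le k-i-1$, which forces $0\le i\le l-1$, so $B_l^k$ collects precisely those windows in which that input has already acted. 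Substituting and separating the deterministic control terms from the stochastic noise terms yields
\begin{equation}
\sum_{i=0}^{m} A_i^k y_{k-i} - \sum_{i=1}^{m} B_i^k G_{k-i}u_{k-i} = \sum_{i=1}^{m} B_i^k w_{k-i} + \sum_{i=0}^{m} A_i^k v_{k-i},
\end{equation}
which is the claimed linear time series with coefficients determined entirely by the system matrices.
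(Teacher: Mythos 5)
Your route is genuinely different from the paper's. The paper never invokes an abstract null-space argument: it splits the window into two overlapping stacks of $m$ measurements each, $\Y_{k,k-m+1}$ and $\Y_{k-1,k-m}$, premultiplies each stacked model by the transpose of its observability matrix so that the (invertible) Gramians $M_{k,k-m+1}$, $M_{k-1,k-m}$ of Eq.~\eqref{eq:obsgram} appear, solves each for the anchor state ($x_{k-m+1}$ and $x_{k-m}$ respectively) up to noise, propagates the second reconstruction through the dynamics $x_{k-m+1}=F_{k-m}x_{k-m}+G_{k-m}u_{k-m}+w_{k-m}$, and subtracts. This produces \emph{explicit closed-form} coefficients $A_i^k$, $B_i^k$ in terms of $M^{-1}$, $\phi$, $H$, $F$. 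Your single-window construction is in fact a generalization: the paper's $\mathcal{A}_k$ satisfies $\mathcal{A}_k\mathcal{O}=F_{k-m}-F_{k-m}=0$, i.e.\ it is one particular element of the left null space you invoke. What the paper's explicit choice buys is (i) coefficients that are canonically defined and, under uniform observability, uniformly bounded (through $\|M^{-1}\|\le 1/\alpha_1$), which matters downstream when these coefficients populate $D_k$ in Eq.~\eqref{eq:lincoveqn} and enter the RLS/RTR convergence analysis, and (ii) a time series of fixed dimension $n$. Your bookkeeping for $B_l^k$ (the index range $0\le i\le l-1$) is correct and matches the structure of the paper's coefficients.

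The genuine gap is the existence step, which you name but do not close. From $m\ge s$ and uniform observability you correctly get that $\mathcal{O}$ has full column rank, so its left null space has dimension $(m+1)p-n$; but nothing in the proposition's hypotheses forces $(m+1)p>n$. In the square case (e.g.\ $m=s$ with $(s+1)p=n$), the only annihilator is zero and your ``time series'' degenerates to the vacuous identity $0=0$. The paper sidesteps this because its construction requires the \emph{sub-window} Gramians $M_{k,k-m+1}$ (built from $m$ measurements, hence $mp$ rows) to be invertible, which needs $mp\ge n$ and hence gives $(m+1)p\ge n+p>n$ automatically. To repair your argument you must either strengthen the window length so that $mp\ge n$, or simply exhibit a concrete nonzero annihilator --- the paper's $\mathcal{A}_k$, or the projector $\I-\mathcal{O}\left(\mathcal{O}^T\mathcal{O}\right)^{-1}\mathcal{O}^T$, which is nonzero precisely when $(m+1)p>n$. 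Relatedly, ``completely determined from $F_k$ and $H_k$'' needs care in your version: a left null space admits infinitely many bases, so without fixing a canonical, boundedly computable choice (as the paper's formulas do), the coefficients feeding the covariance estimator are not well defined as functions of the system matrices.
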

\begin{proof}
	The proof follows from much of the past work on the measurement
	difference methods~\cite{dunik2018design, moghe2019adaptive}. We begin by
	accumulating $m$ measurements by stacking them one on top of the other to
	form a modified measurement model given in Eq.~\eqref{eq:stackmeas}.
	\begin{table*}[!htbp]
	\begin{equation}\label{eq:stackmeas}
		\begin{split}
		&\underbrace{\begin{bmatrix}
				y_k \\ y_{k-1} \\ \vdots \\ y_{k-m+1}
		\end{bmatrix}}_{\triangleq \mathcal{Y}_{k,k-m+1}} =
		\underbrace{\begin{bmatrix}
				H_k\phi_{k,k-m+1} \\ H_{k-1}\phi_{k-1,k-m+1} \\ \vdots \\ H_{k-m+1}
		\end{bmatrix}}_{\triangleq O_{k,k-m+1}} x_{k-m+1} \\
		&+ \underbrace{\begin{bmatrix}
				H_k & H_kF_{k-1} & H_k\phi_{k,k-2} & \cdots & H_k\phi_{k,k-m+2} \\
				\bm{0}_{\sizeH} & H_{k-1} & H_{k-1}F_{k-2} & \cdots & H_{k-1}\phi_{k-1,k-m+2} \\
			\vdots & \vdots & \ddots & \vdots & \vdots \\
			\bm{0}_{\sizeH} & \bm{0}_{\sizeH}& \bm{0}_{\sizeH}& \bm{0}_{\sizeH}& H_{k-m+2} \\
			\bm{0}_{\sizeH} & \bm{0}_{\sizeH}& \bm{0}_{\sizeH}& \bm{0}_{\sizeH}& \bm{0}_{\sizeH}
		\end{bmatrix}}_{\triangleq M^w_{k-1,k-m+1}}
		\underbrace{\begin{bmatrix}
				w_{k-1} + G_{k-1} u_{k-1} \\ w_{k-2} + G_{k-2} u_{k-2} \\ \vdots \\ w_{k-m+1} + G_{k-m+1} u_{k-m+1}
		\end{bmatrix}}_{\triangleq U_{k-1,k-m+1}} +
		\underbrace{\begin{bmatrix}
			v_k \\ v_{k-1} \\ \vdots \\ v_{k-m+1}
		\end{bmatrix}}_{\triangleq V_{k,k-m+1}}
		\end{split}
	\end{equation}
	\end{table*}
	Defining $W_{k-1,k-m+1} = [w_{k-1}^T,\ldots,w_{k-m+1}^T]^T$ and pre
	multiplying by $O_{k,k-m+1}$, the invertible observability Gramian
	$M_{k,k-m+1}$ defined in Eq.~\eqref{eq:obsgram} is recovered as shown
	below.
	\begin{align*}
		&O_{k,k-m+1}^T \Y_{k,k-m+1} = M_{k,k-m+1} x_{k-m+1} + \\
		&O_{k,k-m+1}^TM^w_{k-1,k-m+1}U_{k-1,k-m+1} + O_{k,k-m+1}^T V_{k,k-m+1}
	\end{align*}
	Inverting $M_{k,k-m+1}$ and using a one step predictor for the state
	$x_{k-m+1}$, a linear time series can be formed from the two equations
	given by
	\begin{align*}
		M_{k,k-m+1}^{-1} O_{k,k-m+1}^T \Y_{k,k-m+1} = x_{k-m+1} + M_{k,k-m+1}^{-1} &O_{k,k-m+1}^TM^w_{k-1,k-m+1} U_{k-1,k-m+1} \\
																				   &+ M_{k,k-m+1}^{-1} O_{k,k-m+1}^T V_{k,k-m+1} \\
		M_{k-1,k-m}^{-1} O_{k-1,k-m}^T \Y_{k-1,k-m} = x_{k-m} + M_{k-1,k-m}^{-1} &O_{k-1,k-m}^TM^w_{k-2,k-m} U_{k-2,k-m} \\
															  &+ M_{k-1,k-m}^{-1} O_{k-1,k-m}^T V_{k-1,k-m}
	\end{align*}
	Substituting $x_{k-m+1} = F_{k-m}x_{k-m} + G_{k-m}u_{k-m} + w_{k-m}$ and
	eliminating the state by subtraction, we get a linear time series given by
	\begin{equation}\label{eq:ltvseries}
		\mathcal{A}_k\Y_{k,k-m} = \mathcal{B}_k U_{k-1,k-m} + \mathcal{A}_k V_{k,k-m}
	\end{equation}
	wherein,
	\begin{equation}
		\mathcal{A}_k = [M_{k,k-m+1}^{-1} O_{k,k-m+1}^T, \bm{0}_{n\times p}] - [\bm{0}_{n\times p}, F_{k-m}M_{k-1,k-m}^{-1} O_{k-1,k-m}^T]
	\end{equation}
	and
	\begin{equation}
		\mathcal{B}_k = [M_{k,k-m+1}^{-1} O_{k,k-m+1}^TM^w_{k-1,k-m+1}, \I_{n\times n}] - [\bm{0}_{n\times n}, F_{k-m}M_{k-1,k-m}^{-1} O_{k-1,k-m}^TM^w_{k-2,k-m}]
	\end{equation}
	Separating out individual components of the coefficients of $y_k$
	\begin{align*}
		A_0 &= M_{k,k-m+1}^{-1}\phi_{k,k-m+1}^T H_k^T \\
		A_i &= M_{k,k-m+1}^{-1}\phi_{k-i,k-m+1}^T H_{k-i}^T - F_{k-m}M_{k-1,k-m}^{-1}F_{k-m}^T\phi_{k-i,k-m+1}^T H_{k-i}^T \\
		A_m &= -F_{k-m}M_{k-1,k-m}^{-1}F_{k-m}^T H_{k-m}^T
	\end{align*}
	wherein, $i = 1,\ldots,m-1$ above and the coefficients of $w_k$ is given by
	\begin{align*}
		B_1 &= M_{k,k-m+1}^{-1}\phi_{k,k-m+1}^T M_{k,k} \\
		B_i &= M_{k,k-m+1}^{-1} \phi_{k-i+1,k-m+1}^T M_{k,k-i+1} - F_{k-m} M_{k-1,k-m}^{-1} \phi_{k-i+1,k-m}^T M_{k-1,k-i+1} \\
		B_m &= \I_{n\times n} - F_{k-m}M_{k-1,k-m}^{-1}F_{k-m}^T M_{k-1,k-m+1}
	\end{align*}
	wherein, $i = 2,\ldots,m-1$ above. The statement of the proposition
	follows.
\end{proof}
Defining $\Z_k$ as the left hand side of Eq.~\eqref{eq:timeseries}, the
autocovariance function of $\Z_k$ is given by
\begin{equation}\label{eq:autocov}
	C_{k,k-p} = E[\Z_k \Z_{k-p}^T] = \sum\limits_{i=p+1}^m B^k_i Q {B^{k-p}_{i-p}}^T + \sum\limits_{i=p}^m A^k_i R {A^{k-p}_{i-p}}^T
\end{equation}
wherein $p=0,\ldots,m$. Notice that the autocovariance $C_{k,k-p} =
\bm{0}_{n\times n}$ for $p>m$ vanishes. As long as the number of past
measurements $y_k$ stored at every time instant is greater than $m+1$, the
autocovariance function can be estimated.

\subsection{Covariance Matrix Estimation}\label{sub:covest}

The autocovariance is estimated using a single measurement as $\hat{C}_{k,k-p}
= \Z_k \Z_{k-p}^T$. The elements of the autocovariance function can be
rearranged using the $\vech$ operation as follows.
\begin{equation}\label{eq:lincoveqn}
	\underbrace{\begin{bmatrix}
		\vech[\hat{C}_{k,k}] \\ \uvec[\hat{C}_{k,k-1}] \\ \vdots \\ \uvec[\hat{C}_{k,k-P}]
	\end{bmatrix}}_{\triangleq\, b_k}
	= \underbrace{\begin{bmatrix}
	\sum\limits_{i=1}^m B^k_i\otimes_h B^k_i & \sum\limits_{i=0}^m A^k_i\otimes_h A^k_i \\
	\sum\limits_{i=2}^m B^{k-1}_{i-1}\otimes_u B^k_i & \sum\limits_{i=1}^m A^{k-1}_{i-1}\otimes_u A^k_i \\
	\vdots & \vdots \\
	\sum\limits_{i=P+1}^m B^{k-P}_{i-P}\otimes_u B^k_i & \sum\limits_{i=P}^m A^{k-P}_{i-P}\otimes_u A^k_i
	\end{bmatrix}}_{\triangleq\, D_k}
	\underbrace{\begin{bmatrix}
		\vech[Q] \\ \vech[R]
	\end{bmatrix}}_{\triangleq\, \theta}
\end{equation}
A recursive least squares (RLS) estimation technique starting from an initial
guess $(\thhat_0, \Psi_0)$ is given by
\begin{equation}\label{eq:recursivels}
	\begin{split}
		\thhat_{k+1} &= \thhat_k + L_k(b_k - D_k\thhat_k) \\
		\Psi_{k+1} &= (\I - L_k D_k)\Psi_k (\I - L_k D_k)^T + L_k R_W L_k^T \\
		L_k &= \Psi_k D_k^T(R_W + D_k \Psi_k D_k^T)^{-1}
	\end{split}
\end{equation}
The convergence of the estimate has been established in Ref.~\cite[Theorem
8]{dunik2018design}.

\subsection{Adaptive Kalman Filter}\label{sub:akf}
Using the estimates $\hat{Q}_k$ and $\hat{R}_k$ of the noise covariance
matrices, the following equations constitute the adaptive Kalman filter
equations.
\begin{equation}\label{eq:akf}
	\begin{split}
		\hat{x}_{k|k-1} &= F_{k-1} \hat{x}_{k-1} + G_{k-1} u_{k-1} \\
		\hat{x}_{k|k} &= \hat{x}_{k|k-1} + \khat_k (y_k - H_k \hat{x}_{k|k-1})  \\
		\phat_{k|k-1} &= F_{k-1}\phat_{k-1|k-1}F_{k-1}^T + \hat{Q}_k  \\
		\khat_k &= \phat_{k|k-1}H_k^T (H_k\phat_{k|k-1}H_k^T + \hat{R}_k)^{-1}  \\
		\phat_{k|k} &= (\bm{I}- \khat_k H_k)\phat_{k|k-1}(\bm{I}- \khat_k H_k)^T + \khat_k \hat{R}_k \khat_k^T
	\end{split}
\end{equation}
wherein, $\phat_{k|k}$, $\phat_{k|k-1}$, and $\khat_k$ are the estimates of the
quantities in the nominal Kalman filter~\cite[Chapter 7]{jazwinski2007stochastic}.

\section{Riemannian Trust-Region method}\label{sec:rtr}

Although the recursive least squares successfully estimates the elements of the
noise covariance matrices, it does not guarantee SPD estimates of the
covariance matrix. The convergence of the estimates to the true covariance
matrices is guaranteed provided the matrix $D_k$ is persistently excited.
However, the transients are important when the covariance estimate is
concurrently used to estimate the state vector.  In this case, the filter may
run into a problem of loss of observability or worse, provide negative
information updates to the filter by virtue of a non positive definite noise
covariance matrix estimate. As a result, having a SPD noise covariance matrix
estimate is crucial to obtain a stable adaptive Kalman filter. To this end, a
geometric optimization approach that respects the geometry of SPD matrices is
introduced here. A brief summary of the geometry of SPD matrices is provided
below (for a comprehensive review, see, e.g.,~\cite{bhatia2009positive} for SPD
matrices and~\cite{absil2009optimization} for Riemannian optimization methods).

\subsection{Geometry of Covariance Matrices}\label{sub:geom}
The space $\Pn$ forms a manifold with its tangent space at a point $X\in\Pn$
denoted by $T_X \Pn$ and identified with $\Sn$, the set of symmetric matrices.
The affine invariant metric at $X\in\Pn$ defined by
\begin{equation}\label{eq:aimetric}
	\langle V_1, V_2 \rangle_X = \Tr\{X^{-1} V_1 X^{-1} V_2\}\quad V_1,V_2\in T_X \Pn
\end{equation}
turns the manifold into a Riemannian manifold. The shortest path on the manifold between
two points $X,Y\in\Pn$ is called the geodesic curve and is parameterized as
\begin{equation}\label{eq:geod}
	\gamma(s) = X^\half \left (X^{-\half} Y X^{-\half}\right)^s X^\half \quad s\in [0,1]
\end{equation}
wherein, $\gamma(0)=X$ and $\gamma(1) = Y$ denote the end points of the
geodesic. A geodesic curve emanating from a point $X\in\Pn$ in the direction
$V\in T_X\Pn$ is parameterized by
\begin{equation}\label{eq:geodtangent}
	\gamma_{X,V}(s) = X^\half \Exp \left(s X^{-\half} V X^{-\half}\right) X^\half
\end{equation}
and resides within $\Pn$ for any $s\in\R$. Given a smooth
function $f:\Pn\to\R$, $\fbar$ as the extension of $f$ to $\R^{n\times n}$, a
smooth geodesic curve $\gamma:\R\to\Pn$ such that $\gamma(0)=X\in\Pn$ and
$\dot{\gamma}(0)=V\in T_X\Pn$, the Euclidean gradient $\nabla \fbar$ defined
using the directional derivative $\D\fbar(X)[V]$ of $\fbar$ at $X$ in the
direction $V$ is given as
\begin{equation}\label{eq:eucgrad}
	\Tr\{V\, \nabla\fbar(X)\} = \D\fbar (X) [V]
\end{equation}
The Riemannian gradient of $f$ at $X$, denoted by $\grad\, f (X) \in T_X\Pn$ is
similarly defined as
\begin{equation}\label{eq:riemgrad}
	\langle V, \grad f(X)\rangle_X = \frac{d}{dt} f(\gamma(t))\big|_{t=0}
\end{equation}
Note that the Riemannian gradient is obtained from the Euclidean gradient by
\begin{equation}\label{eq:euctoriemgrad}
	\grad f(X) = X\sym(\nabla\fbar(X)) X
\end{equation}
From~\cite[Section 4.1.4]{jeuris2012survey}, the Riemannian Hessian of $f$
defined as a map $\Hess f(X):T_X \Pn\to T_X\Pn$ is given by
\begin{equation}\label{eq:riemhess}
	\Hess f(X)[V] = \D (\grad f)(X)[V] - \sym (\grad f(X)X^{-1} V)
\end{equation}
Using the above expressions for $\grad f$, the Hessian can be expressed in
terms of the extension $\fbar$ as
\begin{equation}\label{eq:euctoriemhess}
	\Hess f(X)[V] = X\sym(\D (\nabla\fbar)(X)[V])X + \sym(V\sym(\nabla\fbar) X)
\end{equation}

\subsection{Cost function, Gradient and Hessian}\label{sub:costgradhess}
The cost function for the recursive least squares minimization from
Eq.~\eqref{eq:recursivels} is minimized with a Riemannian optimization
framework. The recursive least squares cost function is given by
\begin{equation}\label{eq:lscost}
		J_k(\theta) = \frac{1}{2}(D_k\theta - b_k)^T R_W^{-1} (D_k\theta - b_k) + \frac{1}{2}(\theta - \thhat_{k-1})^T \Psi^{-1}_{k-1} (\theta - \thhat_{k-1})
\end{equation}
Before evaluating the Riemannian gradient and the Riemannian Hessian, the cost
function must be reformatted to explicitly depend on $Q$ and $R$ matrices. Such
reformatting is possible via simple algebraic manipulation. The unique elements
of a SPD matrix are given by
\begin{align}
	\vech[X] = \begin{bmatrix} \I^0_n X e_i \\ \vdots \\ \I^{n-1}_n X e_n \end{bmatrix} =
	\underbrace{\begin{bmatrix} \I^0_n & \bm{0}_{} & \cdots & \bm{0}_{} \\ \bm{0}_{} & \I^1_n & \cdots & \bm{0} \\
	\vdots & \vdots & \ddots & \vdots \\
	\bm{0}_{} & \bm{0} & \cdots & \I^{n-1}_n
	\end{bmatrix}}_{\triangleq \mathcal{I}_n\in\R^{n(n+1)/2 \times n^2}}\,
	(\I_n \otimes X) \uvec[\I_n]
\end{align}
wherein, $X\in\Sn$, $e_i\in\R^n$ is the $i^{th}$ canonical basis vector and
$\I^i_n\in\R^{(n-i)\times n}$ is formed by deleting the first
$i$ rows of $\I_n$, the identity matrix. The following statement provides the
expressions for the gradient of the least squares cost function.
\begin{lem}\label{lm:grad}
	Given the cost function in Eq.~\eqref{eq:lscost}, its Riemannian gradients at $Q$ and $R$ are given by
	\begin{equation}\label{eq:grad}
		\grad J_k (Q,R) = \left(Q\, \nabla_Q\Jbar_k\, Q, R\, \nabla_R\Jbar_k\, R\right)
	\end{equation}
	wherein, $\Jbar_k$ is the Euclidean extension of the cost $J_k$ and the
	expressions for $\nabla_Q \Jbar_k$ and $\nabla_Q\Jbar_k$ are given by
	\begin{equation}\label{eq:gradQ}
			\nabla_Q \Jbar_k = \BTr_n \left\{\sym\left(\uvec[\I_n] \left(D_k^{Q^T} R_W^{-1} (D_k\theta - b_k) + [\I_{m_q}, \bm{0}_{m_q\times m_r}]\Psi^{-1}_{k-1} (\theta - \thhat_{k-1}) \right)^T \mathcal{I}_n\right)\right\}
	\end{equation}
	\begin{equation}\label{eq:gradR}
		\nabla_R \Jbar_k = \BTr_p \left\{\sym\left(\uvec[\I_p] \left(D_k^{R^T} R_W^{-1} (D_k\theta - b_k) + [\bm{0}_{m_r\times m_q}, \I_{m_r}]\Psi^{-1}_{k-1} (\theta - \thhat_{k-1}) \right)^T \mathcal{I}_p\right)\right\}
	\end{equation}
	wherein, $D_k = [D_k^Q, D_k^R]$, $\theta = [\vech[Q]^T, \vech[R]^T]^T$, $m_q = n(n+1)/2$ and
	$m_r=p(p+1)/2$.
\end{lem}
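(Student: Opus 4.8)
The plan is to treat $J_k$ as a smooth function on the product manifold $\Pn\times\Pp$, so that its Riemannian gradient is simply the pair of componentwise Riemannian gradients, each obtained from the Euclidean-to-Riemannian conversion in Eq.~\eqref{eq:euctoriemgrad}. The only genuine work is to compute the matrix-valued Euclidean gradients $\nabla_Q\Jbar_k$ and $\nabla_R\Jbar_k$; by the symmetry of the construction it suffices to carry out the computation for $Q$ and then repeat it verbatim for $R$ with $n$ replaced by $p$, which is why the two displayed expressions differ only in their index sets.

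First I would differentiate $\Jbar_k$ in the flat coordinate $\theta$. Since $J_k$ in Eq.~\eqref{eq:lscost} is quadratic in $\theta$, its ordinary gradient is $\nabla_\theta\Jbar_k = D_k^T R_W^{-1}(D_k\theta - b_k) + \Psi_{k-1}^{-1}(\theta - \thhat_{k-1})$, using that $R_W^{-1}$ and $\Psi_{k-1}^{-1}$ are symmetric. Writing $\theta = [\vech[Q]^T,\vech[R]^T]^T$ and $D_k = [D_k^Q, D_k^R]$, a perturbation $Q\mapsto Q+tV$ with $V\in\Sn$ moves $\theta$ by $t[\vech[V]^T,\bm{0}^T]^T$ because $\vech$ is linear. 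Hence the directional derivative is $\D\Jbar_k(Q)[V] = g_Q^T\vech[V]$, where $g_Q$ is exactly the first $m_q$ rows of $\nabla_\theta\Jbar_k$, i.e. the bracketed vector appearing inside Eq.~\eqref{eq:gradQ}. This isolates the chain-rule factor and reduces the task to re-expressing the scalar $g_Q^T\vech[V]$ in the trace form $\Tr\{V\,\nabla_Q\Jbar_k\}$ demanded by Eq.~\eqref{eq:eucgrad}.

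The key step, and the one I expect to be the main obstacle, is this re-expression, which is where the operators $\mathcal{I}_n$, $\uvec[\I_n]$ and $\BTr_n$ earn their keep. Substituting the identity $\vech[V] = \mathcal{I}_n(\I_n\otimes V)\uvec[\I_n]$ from the display preceding the lemma gives $g_Q^T\vech[V] = g_Q^T\mathcal{I}_n(\I_n\otimes V)\uvec[\I_n]$. Viewing this scalar as its own trace and cycling the factor $\uvec[\I_n]$ to the front yields $\Tr\{A(\I_n\otimes V)\}$ with $A = \uvec[\I_n]\,g_Q^T\,\mathcal{I}_n\in\R^{n^2\times n^2}$. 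Applying the defining property of the block trace, $\Tr\{A(\I_n\otimes V)\} = \Tr\{\BTr_n\{A\}\,V\}$, and comparing with Eq.~\eqref{eq:eucgrad} shows that any matrix with the same symmetric part as $\BTr_n\{A\}$ serves as a valid Euclidean gradient. The care required here is purely bookkeeping of the vectorization identities and the cyclic reordering of non-square factors inside the trace; getting the placement of $\mathcal{I}_n$ and $\uvec[\I_n]$ right is the crux.

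Finally I would symmetrize. Because $V$ ranges over $\Sn$ only, only the symmetric part of $\BTr_n\{A\}$ is determined, so I take $\nabla_Q\Jbar_k = \BTr_n\{\sym(A)\}$; this is legitimate since $\BTr_n$ is linear and commutes with transposition, which one checks directly from its definition to get $\BTr_n\{A^T\} = \BTr_n\{A\}^T$ and hence $\BTr_n\{\sym(A)\} = \sym(\BTr_n\{A\})$. This is precisely Eq.~\eqref{eq:gradQ}, and it renders $\nabla_Q\Jbar_k$ manifestly symmetric. Feeding it into Eq.~\eqref{eq:euctoriemgrad} then collapses the outer symmetrization, so that the $Q$-component of the Riemannian gradient is $Q\,\sym(\nabla_Q\Jbar_k)\,Q = Q\,\nabla_Q\Jbar_k\,Q$. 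Pairing this with the identical computation for $R$ yields Eq.~\eqref{eq:grad} and completes the proof.
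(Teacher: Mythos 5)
Your proposal is correct and follows essentially the same route as the paper's proof: compute the directional derivative of the quadratic cost, insert the identity $\vech[V] = \mathcal{I}_n(\I_n\otimes V)\uvec[\I_n]$, cycle the trace to isolate a factor of $\I_n\otimes V$, apply the block-trace identity $\Tr\{A(\I_n\otimes V)\} = \Tr\{\BTr_n\{A\}V\}$, and convert to the Riemannian gradient via Eq.~\eqref{eq:euctoriemgrad}. Your treatment is in fact slightly more careful than the paper's at the final comparison step, since you note that testing against symmetric $V$ only determines the symmetric part of $\BTr_n\{A\}$ (which is exactly why the $\sym$ appears in Eq.~\eqref{eq:gradQ}) and that this choice collapses the outer symmetrization in Eq.~\eqref{eq:euctoriemgrad}.
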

\begin{proof}
  Consider a geodesic $\gamma_{Q,V_Q}(t)$ as defined in
  Eq.~\eqref{eq:geodtangent}. From the definition of the gradient in
  Eq.~\eqref{eq:eucgrad}, the expression for $\nabla_Q\Jbar_k$ is given by
  \begin{align*}
	  \D \Jbar_k(Q,R)[V_Q, V_R] = \frac{\partial \Jbar_k(\gamma_{Q,V_Q}(s), \gamma_{R,V_R}(s))}{\partial s}\Big|_{s=0} = \Tr\{\nabla_Q\Jbar_k\, V_Q\} + \Tr\{\nabla_R\Jbar_k\, V_R\}
  \end{align*}
  Separating the expressions into the parts containing $Q$ and $R$, we get
  \begin{align*}
  \Tr\{\nabla_Q\Jbar_k\, V_Q\}&= (D_k\theta - b_k)^T R_W^{-1} \left(D_k^Q\mathcal{I}_n \I_n\otimes V_Q \uvec[\I_n] \right) + (\theta - \thhat_{k-1})^T \Psi_{k-1}^{-1}[\I_{m_q}, \bm{0}_{m_q\times m_r}]^T \mathcal{I}_n \I_n\otimes V_Q \uvec[\I_n] \\
							  &= \Tr\{\uvec[\I_n] \left( D_k^{Q^T} R_W^{-1} (D_k \theta - b_k) + [\I_{m_q}, \bm{0}_{m_q\times m_r}]\Psi_{k-1}^{-1}(\theta - \thhat_{k-1}) \right)^T \mathcal{I}_n \I_n\otimes V_Q\}
  \end{align*}
  Further simplification results in an expression given by
  \begin{align*}
	  \Tr\{\nabla_Q\Jbar_k\, V_Q\}   = \Tr\left\{\BTr_n\left\{\uvec[\I_n] \left(D_k^{Q^T} R_W^{-1} (D_k \theta - b_k) + [\I_{m_q}, \bm{0}_{m_q\times m_r}]\Psi_{k-1}^{-1}(\theta - \thhat_{k-1})\right)^T \mathcal{I}_n\right\} V_Q\right\}
  \end{align*}
  wherein, the expression $\Tr\{A (\bm{I}_n\otimes B)\} = \Tr\{\BTr_n\{A\} B\}$ is
  used. Comparing the expressions on both sides of the equations gives the
  result of the lemma. The expression for $\nabla_R\Jbar_k$ results from a
  derivation similar to the one above and is omitted.
\end{proof}

The expression for the Riemannian Hessian of $J_k$ can be derived from
Eq.~\eqref{eq:euctoriemhess} and is given through the following statement.
\begin{lem}\label{lm:hess}
	The expression for the Riemannian Hessian of $J_k$ from
	Eq.~\eqref{eq:euctoriemhess} is given by
	\begin{equation}\label{eq:hess}
		\Hess J_k(Q, R) [V_Q, V_R] = \left( \Hess_Q J_k(Q,R)[V_Q, V_R], \Hess_R J_k(Q,R)[V_Q, V_R]\right)
	\end{equation}
	wherein
	\begin{align*}
		\Hess_Q J_k (Q,R)[V_Q, V_R] &= Q\sym(\D (\nabla_Q\Jbar_k)(Q,R)[V_Q, V_R])Q + \sym(V_Q\sym(\nabla_Q\Jbar_k) Q) \\
		\Hess_R J_k (Q,R)[V_Q, V_R] &= R\sym(\D (\nabla_R\Jbar_k)(Q,R)[V_Q, V_R])R + \sym(V_R\sym(\nabla_R\Jbar_k) R)
	\end{align*}
	The expressions for the directional derivatives of the Euclidean gradients are given by
	\begin{equation}\label{eq:dirderivQ}
			\D (\nabla_Q \Jbar_k) (Q,R)[V_Q, V_R] = \BTr_n \left\{\sym\left(\uvec[\I_n]  \theta_v^T \left(D_k^{Q^T} R_W^{-1} D_k + [\I_{m_q}, \bm{0}_{m_q\times m_r}]\Psi^{-1}_{k-1}\right)^T \mathcal{I}_n\right)\right\}
	\end{equation}
	and
	\begin{equation}\label{eq:dirderivR}
			\D (\nabla_R \Jbar_k) (Q,R)[V_Q, V_R] = \BTr_p \left\{\sym\left(\uvec[\I_p] \theta_v^T \left(D_k^{R^T} R_W^{-1} D_k + [\bm{0}_{m_r\times m_q}, \I_{m_r}]\Psi^{-1}_{k-1}\right)^T \mathcal{I}_p\right)\right\}
	\end{equation}
	wherein, $\theta_v = [\vech[V_Q]^T,\, \vech[V_R]^T]^T$.
\end{lem}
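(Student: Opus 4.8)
The plan is to apply the component-wise Hessian formula~\eqref{eq:euctoriemhess} on the product manifold $\Pn\times\Pp$. Because the affine invariant metric and the exponential map act independently on the two factors, the Riemannian Hessian of $J_k$ decouples into a $Q$-block on $T_Q\Pn$ and an $R$-block on $T_R\Pp$, each governed by~\eqref{eq:euctoriemhess} with $X=Q$ (resp.\ $X=R$) and the corresponding Euclidean gradient $\nabla_Q\Jbar_k$ (resp.\ $\nabla_R\Jbar_k$) supplied by Lemma~\ref{lm:grad}. This immediately yields the two displayed expressions for $\Hess_Q J_k$ and $\Hess_R J_k$ once the directional derivatives $\D(\nabla_Q\Jbar_k)$ and $\D(\nabla_R\Jbar_k)$ are known, so the whole content of the lemma reduces to establishing~\eqref{eq:dirderivQ} and~\eqref{eq:dirderivR}.

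Focusing on the $Q$-block, the key observation is that in the expression for $\nabla_Q\Jbar_k$ from Lemma~\ref{lm:grad} the dependence on the iterate $(Q,R)$ enters only through $\theta=[\vech[Q]^T,\vech[R]^T]^T$; the outer operators $\BTr_n$, $\sym$, and the constant matrices $\uvec[\I_n]$, $\mathcal{I}_n$, together with the coefficient matrices $D_k$, $R_W$ and $\Psi_{k-1}$, carry no dependence on the current iterate and are therefore held fixed under differentiation. First I would note that $\BTr_n$ and $\sym$ are linear maps, so the directional derivative $\D$ commutes with them and with left/right multiplication by the fixed matrices. Consequently $\D(\nabla_Q\Jbar_k)$ is obtained simply by differentiating the residual vector inside the parentheses.

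The central step is then to differentiate this residual along a curve with $Q\mapsto Q+tV_Q$ and $R\mapsto R+tV_R$, for which $\theta\mapsto\theta+t\,\theta_v$ with $\theta_v=[\vech[V_Q]^T,\vech[V_R]^T]^T$. Since both $D_k\theta-b_k$ and $\theta-\thhat_{k-1}$ are affine in $\theta$, their derivatives are $D_k\theta_v$ and $\theta_v$ respectively, so the derivative of the inner residual is $\big(D_k^{Q^T}R_W^{-1}D_k+[\I_{m_q},\bm{0}_{m_q\times m_r}]\Psi_{k-1}^{-1}\big)\theta_v$. Substituting this back and using $(A\theta_v)^T=\theta_v^T A^T$ to move $\theta_v^T$ to the left of the coefficient matrix produces exactly~\eqref{eq:dirderivQ}; the derivation of~\eqref{eq:dirderivR} is identical with $n$, $Q$ and the left block of $\Psi_{k-1}^{-1}$ replaced by $p$, $R$ and the right block.

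I do not expect a genuine obstacle here: once the splitting over the product manifold is in place, the computation is purely algebraic. The only points requiring care are verifying that $\sym$ and $\BTr$ indeed commute with $\D$ (both are linear) and confirming that $D_k$, $R_W$ and $\Psi_{k-1}$ contain no hidden dependence on the current estimate $(Q,R)$, so that they remain constant under the directional derivative. Establishing these two facts reduces the proof to the affine differentiation of the residual described above.
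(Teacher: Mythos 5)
Your proposal is correct and follows essentially the same route as the paper's proof: the paper likewise reduces everything to the observation that $\nabla_Q\Jbar_k$ and $\nabla_R\Jbar_k$ are affine in $(Q,R)$ (through $\theta$), so the directional derivative is obtained by substituting $(V_Q,V_R)$ for $(Q,R)$ and dropping the constant terms, with the product-manifold splitting and Eq.~\eqref{eq:euctoriemhess} supplying the two Hessian blocks. The only cosmetic difference is that the paper differentiates along the geodesics $\gamma_{Q,V_Q}(s)$, $\gamma_{R,V_R}(s)$ while you use straight lines $Q+tV_Q$, $R+tV_R$; since both curves have the same initial velocity, this changes nothing.
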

\begin{proof}
	The directional derivative of $\nabla_Q \Jbar_k (Q,R)[V_Q, V_R]$ is obtained as
	\begin{align}
		D (\nabla_Q \Jbar_k) (Q, R) [V_Q, V_R] = \frac{\partial \nabla_Q\Jbar_k (\gamma_{Q, V_Q}(s), \gamma_{R,V_R} (s))}{\partial s}\Big|_{s=0}
	\end{align}
	Since the gradient is affine in $Q$ and $R$, the directional derivative is
	independent of the points $Q$ and $R$ where it is evaluated. The expression
	is obtained trivially by substituting $V_Q$ and $V_R$ in place of $Q$ and
	$R$ and removing the constant terms.
\end{proof}

\subsection{Riemannian Trust-Region Method}\label{sub:rtr}

The Riemannian trust-region (RTR) method is used to solve the quadratic least squares
cost function in Eq.~\eqref{eq:lscost}. At each step the RTR method performs an
inner iteration that minimizes a quadratic approximation of a cost function at
$Q, R$ given by
\begin{equation}\label{eq:innerquadmodel}
		\hat{m}_{(Q,R)}(V_Q, V_R) = J_k (Q, R) + \langle \grad J_k(Q,R), (V_Q, V_R) \rangle_{(Q,R)} + \frac{1}{2} \langle \Hess J_k (Q,R)[V_Q, V_R], (V_Q, V_R) \rangle_{(Q,R)}
\end{equation}
The optimal $V_Q^*\in T_Q\Pn$ and $V_R^*\in T_R\Pp$ are obtained subject to a
norm constraint on step size given by
\begin{align}\label{eq:normconstraint}
	\|(V_Q, V_R) \|_{(Q,R)} = \sqrt{\langle (V_Q,V_R), (V_Q, V_R) \rangle_{(Q,R)}} \leq \Delta
\end{align}
wherein, $\Delta$ is the trust-region radius. A truncated conjugate gradient
(tCG) method~\cite[Algorithm 11]{absil2009optimization} solves the inner
iteration at each step. Then a verification step evaluates the decrease in the
true and approximate cost function given by the ratio
\begin{equation}\label{eq:ratio}
	\rho = \frac{J_k(Q,R) - J_k(\gamma_{Q,V_Q^*}(1), \gamma_{R,V_R^*}(1))}{\hat{m}_{(Q,R)}(\bm{0}_{n\times n}, \bm{0}_{p\times p}) - \hat{m}_{(Q,R)}(V_Q^*, V_R^*)}
\end{equation}
and decide whether the optimal $(V_Q^*, V_R^*)$ are accepted and whether the radius
$\Delta$ should be decreased. Algorithm~\ref{alg:rtr} describes the RTR
algorithm. The constants used are taken from~\cite[Algorithm
10]{absil2009optimization}.

\begin{algorithm}
\caption{Riemannian Trust-Region Method}\label{alg:rtr}
\begin{algorithmic}[0]
\State \textbf{Input:} $\hat{Q}_{k-1}$, $\hat{R}_{k-1}$, $\Psi_{k-1}$, $D_k$, $b_k$, $R_W$, $\bar{\Delta} > 0$, $\Delta_1 \in (0,\bar{\Delta})$, $\rho\in[0,\frac{1}{4})$
\State \textbf{Initialization:} $(\hat{Q}_k)_1 = \hat{Q}_{k-1}$, $(\hat{R}_k)_1=\hat{R}_{k-1}$
\State \textbf{Output:}  $\hat{Q}_k$, $\hat{R}_k$
\For{$k = 1 \to n$}
\State Minimize $\hat{m}_{(Q,R)}(V_Q,V_R)$  \Comment{Eq.~\eqref{eq:innerquadmodel}}
\State subject to norm constraint with $\Delta_i$ \Comment{Eq.~\eqref{eq:normconstraint}}
\If{$\rho_i<\frac{1}{4}$}
	\State $\Delta_{i+1} = \frac{1}{4}\Delta_i$
\ElsIf{$\rho_i > \frac{3}{4}$ and $\|(V_Q^*)_i, (V_R^*)_i\|=\Delta_i$}
	\State $\Delta_{i+1} = \min(2\Delta_i,\bar{\Delta})$
\Else
	\State $\Delta_{i+1} = \Delta_i$
\EndIf
\If{$\rho_i > \rho_{min}$}
	\State $((\hat{Q}_k)_{i+1}, (\hat{R}_k)_{i+1}) = (\gamma_{Q,V_Q^*}(1), \gamma_{r,V_R^*}(1))$
\Else
	\State $((\hat{Q}_k)_{i+1}, (\hat{R}_k)_{i+1}) = ((\hat{Q}_k)_i, (\hat{R}_k)_i)$
\EndIf
\EndFor
\end{algorithmic}
\end{algorithm}

\begin{rem}\label{rm:lmin}
	The RTR algorithm ensures that the estimates are symmetric and positive
	definite. However, in practice, the SPD noise covariance estimates may be
	arbitrarily close to semidefiniteness. This may create numerical errors in
	the filter updates. To avoid this situation, the minimum eigenvalue of
	$\hat{Q}_k$ and $\hat{R}_k$ is lower bounded by a small positive constant
	$\epsilon > 0$. The modified optimization variable is given by
	\begin{equation}\label{eq:modqr}
		\begin{split}
			\hat{Q}_k &= \epsilon\I_n + \hat{Q}^{\epsilon}_k \\
			\hat{R}_k &= \epsilon\I_p + \hat{R}^{\epsilon}_k
		\end{split}
	\end{equation}
	Such a modification ensures that the eigenvalues of the noise covariance estimates obtained by
	the RTR method are lower bounded by $\epsilon$ instead of zero. Such a
	modification merely results in a shift of the origin and does not affect
	the RLS solution.
\end{rem}

The algorithm for the RTR-based AKF is summarized below.

\begin{algorithm}
\caption{Riemannian Trust-Region based Adaptive Kalman Filter (RTRAKF)}\label{alg:RTRAKF}
\begin{algorithmic}[0]
\State \textbf{Input:} $\hat{x}_0$, $\hat{Q}_0$, $\hat{R}_0$, $\phat_0$, $\Psi_0$, $m$, $P$, $y_i,\, i=1,2,\ldots$
\State \textbf{Output:} $\hat{Q}_k$, $\hat{R}_k$, $\phat_k$, $\hat{x}_k$
\For{$k = 1 \to n$}
\If{$i>m+P$}
	\State Calculate $D_k$ and $b_k$ \Comment{Eq.~\eqref{eq:lincoveqn}}
	\State Calculate the Riemannian Gradient\Comment{Eq.~\eqref{eq:grad}}
	\State Calculate the Riemannian Hessian\Comment{Eq.~\eqref{eq:hess}}
	\State Use Algorithm~\ref{alg:rtr} to obtain $\hat{Q}_i$ and $\hat{R}_i$
	\State Update $\hat{x}_k$ and $\phat_k$ \Comment{Eq.~\eqref{eq:akf}}
\EndIf
\EndFor
\end{algorithmic}
\end{algorithm}

\section{Stability Analysis}\label{sec:stab}
In this section, the main contributions of this paper, i.e., stability of the RTR-based covariance estimation scheme
and the adaptive Kalman filter using the RTR-based covariance estimates is
presented.

\subsection{Convergence of the noise covariance estimates}

The RTR method, by design, ensures that $\hat{Q}_k$ and $\hat{R}_k$ are SPD.
Starting from SPD initial guesses the following results establish the
convergence of the RTR-based noise covariance estimators by comparing them to
the RLS solution.
\begin{prop}\label{prop:alwaysSPD}
	Given that $D_k$ is persistently excited, $\Pr\{\hat{Q}^{RLS}_i\in\Pn, \hat{R}^{RLS}_k \in\Pp,\, \forall i > k\} \ktendsto 1$.
\end{prop}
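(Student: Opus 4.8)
The plan is to combine consistency of the recursive least--squares (RLS) estimator with the fact that the SPD cone is an \emph{open} set. Let $\mathcal{S}:=\{\theta=[\vech[Q]^T,\vech[R]^T]^T : Q\in\Pn,\ R\in\Pp\}$. Because $\vech$ is a linear isomorphism onto $\R^{m_q}$ (resp.\ $\R^{m_r}$) and $\Pn,\Pp$ are open cones in the spaces of symmetric matrices, $\mathcal{S}$ is open in $\R^{m_q+m_r}$. The true parameter $\theta^\ast$ attached to the genuine (positive definite) covariances $Q,R$ of Eq.~\eqref{eq:sys} lies in the interior of $\mathcal{S}$, so $\delta:=\mathrm{dist}(\theta^\ast,\partial\mathcal{S})>0$. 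Consequently $\|\thhat^{RLS}_i-\theta^\ast\|<\delta$ already forces $\thhat^{RLS}_i\in\mathcal{S}$, i.e.\ $\hat{Q}^{RLS}_i\in\Pn$ and $\hat{R}^{RLS}_i\in\Pp$; the whole problem thus reduces to controlling how far the RLS iterate strays from $\theta^\ast$.

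First I would invoke the consistency result for the RLS estimator under persistent excitation, \cite[Theorem 8]{dunik2018design}, which yields $\thhat^{RLS}_i\to\theta^\ast$. Persistent excitation makes the information matrix grow unboundedly, the standard sufficient condition for \emph{strong} (almost sure) consistency of least--squares estimates, so I would take the convergence to hold with probability one. I would then recast the target event as the nested family $A_k:=\{\thhat^{RLS}_i\in\mathcal{S}\ \text{for all } i>k\}$, which is increasing in $k$ with $\bigcup_k A_k$ equal to the event that $\thhat^{RLS}_i\in\mathcal{S}$ eventually. Since $\theta^\ast$ is an interior point of $\mathcal{S}$ and $\thhat^{RLS}_i\to\theta^\ast$ almost surely, on a probability--one set there is a finite (sample--dependent) index past which $\|\thhat^{RLS}_i-\theta^\ast\|<\delta$; hence $\Pr\{\bigcup_k A_k\}=1$, and continuity of the measure from below delivers $\Pr\{A_k\}\uparrow 1$, which is exactly the claim.

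The main obstacle is the uniform-in-tail quantifier ``for all $i>k$'': convergence in probability (or in mean square) at each fixed time does not by itself pin down the entire tail simultaneously, so the monotone--events argument really needs the almost--sure upgrade. If only the mean--square statement is available, I would instead bound the complement directly by a union bound and Chebyshev's inequality, $\Pr\{A_k^c\}\le\sum_{i>k}\Pr\{\|\thhat^{RLS}_i-\theta^\ast\|\ge\delta\}\le\delta^{-2}\sum_{i>k}E\|\thhat^{RLS}_i-\theta^\ast\|^2$, reducing the proposition to summability of the mean--square error along the tail; establishing that summable rate---equivalently, promoting mean-square to almost-sure convergence through a Borel--Cantelli argument---is the technical crux.
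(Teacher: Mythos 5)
Your proposal follows the same skeleton as the paper's proof: invoke the mean-square consistency of the RLS estimator from \cite[Theorem 8]{dunik2018design} under persistent excitation, and combine it with the observation that the true $\theta^\ast$ lies in the open set corresponding to $\Pn\times\Pp$, so that any estimate within a fixed distance $\delta>0$ of $\theta^\ast$ automatically yields SPD matrices. Where you genuinely depart from the paper is at the tail quantifier, and this is the most interesting part of the comparison. The paper derives $\Pr\{\hat{Q}^{RLS}_k\in\Pn,\,\hat{R}^{RLS}_k\in\Pp\}\ktendsto 1$ for each fixed $k$ and then simply asserts that this ``in turn ensures'' $\Pr\{\exists\, i>k:\ \hat{Q}^{RLS}_i\notin\Pn,\ \hat{R}^{RLS}_i\notin\Pp\}\ktendsto 0$; no argument is given for that inference, and as you correctly point out, pointwise convergence in probability (or in mean square) does not by itself control the entire tail $\{i>k\}$ simultaneously. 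Your two repairs --- upgrading to almost-sure convergence and using continuity of the measure from below on the increasing events $A_k$, or alternatively a union bound plus Chebyshev contingent on summability of the mean-square error (a Borel--Cantelli argument) --- are exactly the kind of additional input needed to make the step rigorous. The trade-off: the paper's version is shorter but contains an unjustified inference at precisely this point, while yours is logically complete modulo the strong-consistency (or summable-rate) hypothesis, which you honestly flag as the technical crux rather than sweep under the rug; note, however, that \cite[Theorem 8]{dunik2018design} as cited supplies only mean-square convergence, so your almost-sure upgrade is an external assumption that would need to be verified for this particular estimator before the first route closes.
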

\begin{proof}
	The convergence of the batch least squares estimate $\thhat_k$ in the mean
	squared sense to the true value $\theta^*$ was established given that the
	combined coefficient matrix $D = [D_1^T, D_2^T,\ldots]^T$ is full column
	rank~\cite[Theorem 8]{dunik2018design}.  Since, $D_k$ is persistently
	excited, the full rank condition is automatically satisfied. Since,
	convergence in the mean squared sense implies convergence in probability,
	we know that $\Pr\{\|\thhat_k - \theta^*\| > 0\} \to 0$. Consequently, for
	any constant $\delta>0$, $\Pr\{\|\thhat_k - \theta^*\| < \delta\} \ktendsto
	1$.  We know that the true $Q$ and $R$ which are formed from the $\theta^*$
	elements are SPD.  Hence, there exists a $\delta$ such that $\forall\,
	\thhat_k\, : \|\thhat_k - \theta^*\| < \delta$, $\thhat_k$ is such that the
	matrices $\hat{Q}_k$ and $\hat{R}_k$ formed by its elements are SPD.
	Picking such a $\delta$ ensures that $\Pr\{\hat{Q}^{RLS}_k\in\Pn,\,
	\hat{R}^{RLS}_k\in\Pp\} \ktendsto 1$ which in turn ensures the following
	statement.  \[\Pr\{\exists i > k,\, \hat{Q}^{RLS}_i\notin\Pn,
	\hat{R}^{RLS}_i\notin\Pp\} \ktendsto 0.\] The negation of the above
	statement proves the statement of the proposition.
\end{proof}

\begin{prop}\label{prop:uniqueSolution}

	Given $\hat{Q}_k\in\Pn$, $\hat{R}_k\in\Pp$,
	$\Psi_k\in\mathbb{P}_{m_q+m_r}$, and the one step RLS and RTR solutions
	denoted by $(\hat{Q}^{RLS}_{k+1}, \hat{R}^{RLS}_{k+1})$ and
	$(\hat{Q}^{RTR}_{k+1}, \hat{R}^{RTR}_{k+1})$ respectively, if
	$\hat{Q}^{RLS}_{k+1}\in\Pn$ and $\hat{R}^{RLS}_{k+1}\in\Pp$ then
	$(\hat{Q}^{RLS}_{k+1}, \hat{R}^{RLS}_{k+1}) = (\hat{Q}^{RTR}_{k+1},
	\hat{R}^{RTR}_{k+1})$.

\end{prop}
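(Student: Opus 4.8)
The plan is to exploit the fact that the recursive least squares update in Eq.~\eqref{eq:recursivels} returns nothing other than the unique unconstrained minimizer of the quadratic cost $J_k$ in Eq.~\eqref{eq:lscost}, and that the Riemannian critical points of $J_k$ on the manifold $\Pn\times\Pp$ coincide exactly with the Euclidean ones. First I would establish strict convexity of $J_k$ over the vectorized parameter space $\R^{m_q+m_r}$: its Euclidean Hessian is $D_k^T R_W^{-1} D_k + \Psi_{k-1}^{-1}$, which is positive definite because $\Psi_{k-1}^{-1}\succ 0$ (here $\Psi_k\in\mathbb{P}_{m_q+m_r}$ by hypothesis). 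Hence $J_k$ has a single stationary point, obtained by setting its Euclidean gradient to zero; solving the resulting normal equations reproduces precisely the RLS recursion, so $\theta^{RLS}_{k+1}$ is the unique global Euclidean minimizer of $J_k$.

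Next I would characterize the stationary points of $J_k$ over the manifold. By Lemma~\ref{lm:grad}, the Riemannian gradient factorizes as $\grad J_k(Q,R) = (Q\,\nabla_Q\Jbar_k\,Q,\; R\,\nabla_R\Jbar_k\,R)$. Since every $Q\in\Pn$ and $R\in\Pp$ is invertible, this expression vanishes if and only if $\nabla_Q\Jbar_k = 0$ and $\nabla_R\Jbar_k = 0$, i.e.\ if and only if the Euclidean gradient of $J_k$ vanishes. Therefore the Riemannian critical points of $J_k$ on $\Pn\times\Pp$ are exactly the Euclidean stationary points that happen to lie in the (open) manifold. By the hypothesis $\hat{Q}^{RLS}_{k+1}\in\Pn$ and $\hat{R}^{RLS}_{k+1}\in\Pp$, the unique Euclidean minimizer is feasible, so it is the one and only Riemannian critical point; being the global minimizer over all symmetric matrices, it is a fortiori the global minimizer of $J_k$ restricted to the manifold.

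It then remains to argue that the RTR method of Algorithm~\ref{alg:rtr}, initialized at the SPD point $(\hat{Q}_k,\hat{R}_k)$, converges to this unique critical point. I would invoke the global convergence guarantee for the Riemannian trust-region method~\cite[Theorem 7.4.4]{absil2009optimization}, which yields $\liminf_i\|\grad J_k\|=0$ along the iterates, so that any accumulation point is Riemannian critical; since the cost decreases monotonically along the RTR iterates and there is exactly one critical point, the iterates converge to it, giving $(\hat{Q}^{RTR}_{k+1},\hat{R}^{RTR}_{k+1}) = (\hat{Q}^{RLS}_{k+1},\hat{R}^{RLS}_{k+1})$.

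The main obstacle will be making this last step rigorous near the boundary of the SPD cone: the factor $Q(\cdot)Q$ in the Riemannian gradient can shrink as $Q$ approaches singularity, so one must rule out accumulation points on the boundary where the manifold structure degenerates. I would handle this by noting that the regularization term of $J_k$ is coercive in $\theta$, so the RTR iterates remain in a bounded Euclidean sublevel set, and that the eigenvalue floor $\epsilon\I$ introduced in Remark~\ref{rm:lmin} keeps them uniformly bounded away from singularity; together these confine the iterates to a compact subset of $\Pn\times\Pp$ on which the accumulation-point argument applies and $\theta^{RLS}_{k+1}$ is the only critical point.
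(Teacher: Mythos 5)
Your proposal follows the same route as the paper's own proof: strict convexity of $J_k$ in $\theta$ (Euclidean Hessian $D_k^T R_W^{-1} D_k + \Psi_{k-1}^{-1} \succ 0$) identifies the RLS update of Eq.~\eqref{eq:recursivels} as the unique Euclidean minimizer; the factorization $\grad J_k(Q,R) = \left(Q\,\nabla_Q\Jbar_k\,Q,\; R\,\nabla_R\Jbar_k\,R\right)$ from Lemma~\ref{lm:grad}, together with invertibility of every point of $\Pn\times\Pp$, shows that Riemannian critical points coincide with Euclidean stationary points lying inside the cone; and the global convergence result~\cite[Theorem 7.4.4]{absil2009optimization} delivers the RTR iterates to a critical point. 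The paper's proof consists of exactly these three steps (with the equality qualified ``up to the error induced by the stopping criterion,'' a caveat you drop). Where you go beyond the paper is your final paragraph: you correctly observe that a vanishing Riemannian gradient norm does not force a vanishing Euclidean gradient if the iterates drift toward the boundary of the cone, since the affine-invariant norm of $\grad J_k$ carries conformal factors that deflate it as $Q$ or $R$ degenerates. This is a real loophole that the paper's proof passes over silently.

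However, your patch for that loophole does not hold up. The eigenvalue floor of Remark~\ref{rm:lmin} is only a shift of origin: the RTR optimization is run in the variable $\hat{Q}^\epsilon_k = \hat{Q}_k - \epsilon\I_n$, which still ranges over all of $\Pn$, so nothing prevents the shifted iterates from approaching singularity; the floor constrains the reported estimate $\hat{Q}_k$, not the distance of the optimization variable from the boundary of \emph{its} manifold. Likewise, coercivity of $J_k$ bounds the iterates in the Euclidean sense, but the intersection of a bounded sublevel set with the open cone $\Pn\times\Pp$ need not be closed, so boundedness plus the floor does not produce the compact subset of the manifold that your accumulation-point argument requires. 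To genuinely close the gap one would have to show that the initial sublevel set is strictly interior to the cone (false in general), or exploit the behavior of the model and Hessian near the boundary to rule out boundary accumulation, or concede---as the paper implicitly does---that the identification of the RTR output with the RLS solution holds only modulo this boundary issue and the finite-iteration stopping criterion.
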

\begin{proof}
	The cost function given in Eq.~\eqref{eq:lscost} is quadratic with a
	positive definite Euclidean Hessian and is hence convex in the argument
	$\theta$. As a result, the recursive least squares minimizer produces
	unique solutions up to the error due to the stopping criterion. Similarly,
	the choice of constants in Algorithm~\ref{alg:rtr} and the usage of exact
	Hessian ensures that $\lim\limits_{k\to\infty} \grad\, J_k = 0$ for the RTR
	algorithm~\cite[Theorem 7.4.4]{absil2009optimization}. Since,
	$\hat{Q}_k\in\Pn$ and $\hat{R}_k\in\Pp$, $\grad\, J_k = 0 \implies
	(\nabla_Q \Jbar_k, \nabla_R \Jbar_k) = (0,0)$. Hence, this solution exactly
	matches the solution from the RLS step up to the error induced due by the
	stopping criterion. The statement of the proposition follows.
\end{proof}

\begin{thm}\label{thm:cov}

	Given that $D_k$ is persistently excited, the sequences $\hat{Q}^{RTR}_k$ and
	$\hat{R}^{RTR}_k$ found using the Algorithm~\ref{alg:rtr} converge to their
	true values, $Q$ and $R$ respectively, in probability.

\end{thm}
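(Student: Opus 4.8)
The plan is to transfer the already-established convergence of the recursive least-squares (RLS) estimator onto the RTR estimator by showing that the two trajectories coincide on an event whose probability approaches one. Three ingredients are in hand: the RLS estimate $\thhat_k$ converges to $\theta^*$ in the mean-squared sense, hence in probability, under persistent excitation by \cite[Theorem 8]{dunik2018design}; Proposition~\ref{prop:alwaysSPD} guarantees that the RLS iterates are simultaneously SPD for every large index with probability tending to one; and Proposition~\ref{prop:uniqueSolution} states that, starting from a common SPD iterate, a single RLS update and a single RTR update agree whenever the RLS update is itself SPD. The remaining work is to stitch these single-step facts into a trajectory-level convergence statement.

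First I would set up the comparison. Fixing $\delta>0$ and using the triangle inequality together with a union bound,
\begin{equation*}
	\Pr\{\|\thhat^{RTR}_k - \theta^*\| > \delta\} \leq \Pr\{\thhat^{RTR}_k \neq \thhat^{RLS}_k\} + \Pr\{\|\thhat^{RLS}_k - \theta^*\| > \delta\}.
\end{equation*}
The second term vanishes as $k\to\infty$ by the mean-square convergence of RLS, so it suffices to prove that $\Pr\{\thhat^{RTR}_k \neq \thhat^{RLS}_k\}\ktendsto 0$, i.e. that the two schemes produce the same iterate with probability approaching one. Once this is shown, $\thhat^{RTR}_k \ptendsto \theta^*$, and since $\hat{Q}^{RTR}_k$ and $\hat{R}^{RTR}_k$ are assembled from the entries of $\thhat^{RTR}_k$ by a fixed continuous (indeed linear, via $\vech$) map, the continuous-mapping principle yields the stated convergence of the covariance estimates.

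To couple the trajectories I would argue inductively from Proposition~\ref{prop:uniqueSolution}. Both recursions start from the same SPD guess $(\thhat_0,\Psi_0)$, and the covariance recursion $\Psi_k$ in Eq.~\eqref{eq:recursivels} is common to both because it does not depend on $\theta$. Hence, whenever $\thhat^{RTR}_{k-1}=\thhat^{RLS}_{k-1}$ the cost $J_k$ in Eq.~\eqref{eq:lscost} is identical in both schemes; if the one-step RLS minimizer is SPD, Proposition~\ref{prop:uniqueSolution} forces $\thhat^{RTR}_k=\thhat^{RLS}_k$, and synchronization is preserved. Thus on the event that \emph{every} RLS update is SPD the two trajectories coincide for all $k$.

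The main obstacle is that this induction uses the from-the-start SPD event, whereas Proposition~\ref{prop:alwaysSPD} only places SPD-ness of the \emph{tail} $\{i>k\}$ on a high-probability event. The difficulty is structural: each scheme feeds its own previous estimate $\thhat_{k-1}$ back as the prior in $J_k$, so a transient loss of positive definiteness in RLS makes the two priors—and hence the two cost functions—diverge, and a later recovery of SPD-ness does not by itself re-synchronize the trajectories. The delicate part is therefore to show that such a mismatch cannot persist. I would exploit the exponential forgetting of the RLS linear recursion $\thhat_k=(\I - L_{k-1}D_{k-1})\thhat_{k-1}+L_{k-1}b_{k-1}$, which is contracting under persistent excitation: an RTR iterate that lies in the SPD regime is a critical point of the convex $J_k$, hence obeys exactly this recursion, so once the RTR iterate enters and remains in the SPD regime the discrepancy $\thhat^{RTR}_k-\thhat^{RLS}_k$ decays to zero. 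Because the common limit $\theta^*$ lies in the interior (strictly SPD), iterates close to $\theta^*$ have SPD one-step updates, which keeps the RTR iterate in the SPD regime and closes the bootstrap. Making this forgetting-plus-interiority argument rigorous, rather than the individual convergence facts, is where the real work lies.
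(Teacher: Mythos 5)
Your proposal follows the same core route as the paper's own proof: transfer the mean-square (hence in-probability) convergence of the RLS estimate $\thhat_k$ onto the RTR iterates by arguing that the two trajectories coincide on an event of probability tending to one, invoking Proposition~\ref{prop:alwaysSPD} for tail SPD-ness of the RLS iterates and Proposition~\ref{prop:uniqueSolution} for one-step agreement, and finishing with continuity of the map from $\thhat_k$ to $(\hat{Q}_k,\hat{R}_k)$. Your triangle-inequality-plus-union-bound decomposition is simply a cleaner writeup of what the paper does implicitly.

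Where you go beyond the paper is in flagging the trajectory-synchronization issue, and it is worth stating plainly that this is a gap in the \emph{paper's} argument, not an artifact of your approach. The paper's proof asserts that Propositions~\ref{prop:alwaysSPD} and~\ref{prop:uniqueSolution} together give $\Pr\{(\hat{Q}^{RLS}_k, \hat{R}^{RLS}_k) \neq (\hat{Q}^{RTR}_k, \hat{R}^{RTR}_k)\} \ktendsto 0$, but Proposition~\ref{prop:uniqueSolution} is a one-step statement that presupposes both schemes are launched from the \emph{same} prior iterate $(\hat{Q}_k,\hat{R}_k)$. As you observe, each scheme feeds its own previous estimate back into the cost in Eq.~\eqref{eq:lscost}, so a transient non-SPD RLS iterate desynchronizes the two cost functions, and SPD-ness of the RLS tail does not by itself re-synchronize them; the paper offers no argument for this step. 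Your proposed repair---using the contraction (exponential forgetting) of the RLS recursion in Eq.~\eqref{eq:recursivels} under persistent excitation, together with interiority of $\theta^*$ in the SPD cone so that iterates near $\theta^*$ have SPD one-step updates---is a sensible way to close the gap, but it remains a sketch: you would still need to show that an RTR iterate in the SPD regime is an exact critical point of the convex cost (this holds only up to the tCG stopping tolerance, a caveat Proposition~\ref{prop:uniqueSolution} itself admits), and that the contraction of $(\I - L_k D_k)$ is uniform in $k$ under persistent excitation. In short: same approach as the paper, with an honest identification of, and a plausible but incomplete patch for, a step the paper takes for granted.
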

\begin{proof}
	From Proposition~\ref{prop:alwaysSPD}, we know that
	$\Pr\{\hat{Q}^{RLS}_i\in\Pn,\, \hat{R}^{RLS}_i\in\Pp,\, \forall i > k\} \ktendsto 1$.
	Hence, from Proposition~\ref{prop:uniqueSolution}, we get that
	$\Pr\{(\hat{Q}^{RLS}_k, \hat{R}^{RLS}_k) \neq (\hat{Q}^{RTR}_k,
	\hat{R}^{RTR}_k)\} \ktendsto 0$. Since the RLS solution
	converges to the true value in the mean squares sense and the $RTR$
	solution matches the least squares solution with probability $1$ as
	$k\to\infty$, the RTR solution converges in probability to $(Q,R)$.
\end{proof}

\subsection{Convergence of the state error covariance matrix}
The stability properties of the adaptive Kalman filter using the RTR-based noise covariance
estimates are established through the following statement.
\begin{prop}\label{prop:expstab}
	Given the noise covariance estimates $\hat{Q}_k$ and $\hat{R}_k$ from the
	RTR Algorithm described in Algorithm~\ref{alg:rtr}, the adaptive Kalman
	filter from Eq.~\eqref{eq:akf} is exponentially stable.
\end{prop}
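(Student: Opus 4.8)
The plan is to reduce the claim to a uniform two-sided bound on the estimated error covariance $\phat_{k|k}$ and then invoke the classical Lyapunov stability argument for the Kalman filter. Writing $e_{k|k} = \hat{x}_{k|k} - x_k$, the recursion in Eq.~\eqref{eq:akf} gives $e_{k|k} = (\I - \khat_k H_k)F_{k-1}\,e_{k-1|k-1} + (\text{noise})$, so exponential stability of the filter is exactly uniform exponential stability of the closed-loop transition matrix $\Phi^{cl}_{k} \triangleq (\I - \khat_k H_k)F_{k-1}$. I would certify this with the Lyapunov function $V_k(e) = e^T \phat_{k|k}^{-1} e$, the standard choice for filter stability, and establish a strict contraction $V_{k+s} \le (1-\eta)V_k$ for some $\eta\in(0,1)$ over windows of the observability length $s$.

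The first and central ingredient is a uniform, two-sided bound on the noise covariance estimates. The lower bound is immediate from Remark~\ref{rm:lmin}: the shift in Eq.~\eqref{eq:modqr} yields $\hat{Q}_k \succeq \epsilon\I_n$ and $\hat{R}_k \succeq \epsilon\I_p$ for every $k$. For the upper bound I would argue that the RTR iterates remain in a fixed bounded set: each accepted inner step is a geodesic move of affine-invariant length at most $\bar{\Delta}$, and the regularization term $\tfrac12(\theta - \thhat_{k-1})^T\Psi_{k-1}^{-1}(\theta - \thhat_{k-1})$ in Eq.~\eqref{eq:lscost} penalizes departure from the previous accepted estimate, so that with bounded data $(D_k, b_k)$ the sequence $(\hat{Q}_k, \hat{R}_k)$ cannot escape a compact region; this supplies constants $q_2, r_2$ with $\hat{Q}_k \preceq q_2\I_n$ and $\hat{R}_k \preceq r_2\I_p$.

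Given $\epsilon\I \preceq \hat{Q}_k \preceq q_2\I$ and $\epsilon\I \preceq \hat{R}_k \preceq r_2\I$, I would transfer Assumption~\ref{as:obscon} to the filter built from the estimates. Uniform observability of $(F_k, H_k)$ is a property of the true system and carries over verbatim. Uniform controllability of $(F_k, \hat{Q}_k^{\half})$ follows from the lower bound, since the $i=k+s$ summand of the Gramian in Eq.~\eqref{eq:congram} already contributes $\epsilon\I$, while boundedness of the system matrices and of $\hat{Q}_k$ bounds it above. The filter equations~\eqref{eq:akf} are precisely the Riccati recursion of a Kalman filter for a fictitious LTV system whose process and measurement noise covariances are the uniformly bounded, uniformly positive-definite sequences $\hat{Q}_k, \hat{R}_k$; hence Theorem~5.3 of Ref.~\cite{anderson1981detectability} applies and produces constants with $\underline{p}\,\I \preceq \phat_{k|k} \preceq \bar{p}\,\I$ for all $k$.

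Finally I would close the Lyapunov argument. The bounds $\underline{p}\,\I \preceq \phat_{k|k} \preceq \bar{p}\,\I$ make $V_k$ uniformly equivalent to $\|e\|^2$, so a per-step contraction of $V_k$ yields uniform exponential stability of $\Phi^{cl}$. Propagating $V_k$ through Eq.~\eqref{eq:akf} and using the standard Riccati/Lyapunov inequality for the closed-loop map, together with the information injected over one observability window, gives the strict decrease $V_{k+s}\le(1-\eta)V_k$ as in \cite[Chapter 7]{jazwinski2007stochastic}. I expect the main obstacle to be the uniform upper bound on $\hat{Q}_k$ and $\hat{R}_k$: unlike the lower bound it is not provided by Remark~\ref{rm:lmin}, and because the proposition is asserted to hold without persistence of excitation one cannot lean on convergence of the estimates (Theorem~\ref{thm:cov}); the bound must instead be extracted from the trust-region radius cap $\bar{\Delta}$, the regularization in Eq.~\eqref{eq:lscost}, and boundedness of the data alone.
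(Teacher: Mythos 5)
Your proof takes the same essential route as the paper's, plus one redundant layer and one extra (and justified) worry. The paper's proof is exactly your middle step: it transfers Assumption~\ref{as:obscon} to the pairs $(F_k,\hat{R}_k^{-\half}H_k)$ and $(F_k,\hat{Q}_k^{\half})$ by writing down the corresponding Gramians, and then invokes \cite[Theorem 5.3]{anderson1981detectability}. It never constructs the Lyapunov function $V_k(e)=e^T\phat_{k|k}^{-1}e$ or the windowed contraction; that argument is the content of the cited theorem, so your final paragraph re-proves the citation rather than adding anything beyond it.

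Where you genuinely depart from the paper is in demanding uniform two-sided bounds $\epsilon\I\preceq\hat{Q}_k\preceq q_2\I$ and $\epsilon\I\preceq\hat{R}_k\preceq r_2\I$ before the transfer --- and you are right to demand them. Definitions~\ref{defn:obs} and~\ref{defn:con} require constants independent of $k$, whereas the paper's proof argues only that each Gramian is SPD because $\hat{Q}_k\succ 0$ and $\hat{R}_k\succ 0$; pointwise positive definiteness does not give uniformity, so the paper's own proof has a gap here. Remark~\ref{rm:lmin} supplies two of the four required bounds: $\hat{R}_i^{-1}\preceq\epsilon^{-1}\I$ gives the upper bound $\alpha_2/\epsilon$ on the adaptive observability Gramian, and $\hat{Q}_i\succeq\epsilon\I$ gives the lower bound on the adaptive controllability Gramian (via its last summand, as you note). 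The remaining two --- the lower observability-Gramian bound and the upper controllability-Gramian bound --- require precisely the caps $r_2$, $q_2$ you identify. However, your proposed derivation of those caps does not hold up: a per-step geodesic move of length at most $\bar{\Delta}$ bounds each increment but not the accumulated drift over infinitely many outer iterations; the regularizer in Eq.~\eqref{eq:lscost} pulls toward $\thhat_{k-1}$, not toward any fixed point, so it cannot confine the sequence to a compact set; and ``bounded data'' is unavailable, since $b_k$ is assembled from products $\Z_k\Z_{k-p}^T$ of Gaussian-driven signals whose supremum over $k$ is almost surely infinite. A rigorous repair would be an explicit cap mirroring Remark~\ref{rm:lmin}'s floor (e.g., constraining the estimates to $\{X : \epsilon\I\preceq X\preceq\lambda_{\max}\I\}$), or a boundedness argument for the RLS iterates transferred to the RTR iterates through Proposition~\ref{prop:uniqueSolution}. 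In short: your approach is the paper's approach carried out more carefully; the obstacle you flag is real, it is not closable by the trust-region-radius argument you sketch, and it sits, unacknowledged, inside the paper's own proof as well.
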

\begin{proof}
	The proof follows from~\cite[Theorem 5.3]{anderson1981detectability} and
	Definitions~\ref{defn:obs} and~\ref{defn:con}. The observability Gramian for
	the pair $(F_k, \hat{R}^{-\half}_k H_k)$ corresponding to the adaptive
	Kalman filter is given by
	\[
		\hat{M}_{k+m,k} = \sum\limits_{i=k}^{k+m} \phi_{i,k}^T H_i^T \hat{R}_i^{-1} H_i \phi_{i,k}
	\]
	Since, $\hat{R}_k \succ 0$ and the observability Gramian $M_{k+m,k}$ for
	the known case is SPD, the observability Gramian $\hat{M}_{k+m,k}$ for the
	adaptive Kalman filter using the RTR noise covariance matrix estimates is
	also SPD. Hence, the pair $(F_k, \hat{R}^{-\half}_k H_k)$ is
	uniformly observable. Given that $\hat{Y}_{k+s,k}\succ\bm{0}$, the
	controllability Gramian for the pair $(F_k, \hat{Q}_k^{\half})$
	corresponding to the adaptive Kalman filter for the same $s>0$ is given by
	\[
		\hat{Y}_{k+s,k} = \sum\limits_{i=k}^{k+s} \phi_{i,k} \hat{Q}_i \phi_{i,k}^T
	\]
	Since $\hat{Q}_k\succ 0$, the $\hat{Y}_{k+s,k}$ is non-singular and
	the pair $(F_k, \hat{Q}_k^{\half})$ is uniformly controllable. Hence
	from~\cite[Theorem 5.3]{anderson1981detectability}, the adaptive Kalman
	filter is exponentially stable.
\end{proof}

The following statement establishes the convergence of the state error
covariance matrix of the adaptive Kalman filter.
\begin{thm}
	The state error covariance matrix sequence $\phat_k$ of the adaptive Kalman
	filter converges to the state error covariance matrix $\popt_k$ of the
	optimal Kalman filter with probability $1$.
\end{thm}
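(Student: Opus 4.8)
The plan is to regard the adaptive covariance recursion of Eq.~\eqref{eq:akf}, driven by the RTR estimates $(\hat{Q}_k,\hat{R}_k)$, as the optimal (true-parameter) Riccati recursion subjected to a vanishing parametric perturbation, and to show that the induced covariance error decays to zero along every sample path on which the estimates converge. Throughout I write $\phat_{k|k}$ and $\popt_{k|k}$ for the filtered covariances of the adaptive and optimal filters (so $\popt_k$ in the statement is $\popt_{k|k}$), let $\Phi_k^{Q,R}$ denote the one-step map $P\mapsto P^+$ defined by Eq.~\eqref{eq:akf} at time $k$ with fixed parameters $(Q,R)$, and set $E_k\triangleq\phat_{k|k}-\popt_{k|k}$ and $\delta_k\triangleq\|\hat{Q}_k-Q\|+\|\hat{R}_k-R\|$.

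First I would establish uniform boundedness. By Proposition~\ref{prop:expstab} the adaptive filter is exponentially stable, and by Assumption~\ref{as:obscon} with~\cite[Theorem 5.3]{anderson1981detectability} so is the optimal filter; together with the eigenvalue floor $\hat{Q}_k,\hat{R}_k\succeq\epsilon\I$ from Remark~\ref{rm:lmin} this bounds both $\phat_{k|k}$ and $\popt_{k|k}$ above (stability/observability) and below (controllability and the $\epsilon$-floor), uniformly in $k$. Hence both sequences remain in a fixed compact set $\mathcal{K}\subset\Pn$ on which the map $(P,Q,R)\mapsto\Phi_k^{Q,R}(P)$ is smooth, and therefore Lipschitz with a constant $C$ uniform in $k$.

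Next I would use the exponential forgetting of the fixed-parameter Riccati recursion. Exponential stability of the optimal filter means precisely that the true-parameter flow $\Phi_{k\leftarrow j}\triangleq\Phi_k^{Q,R}\circ\cdots\circ\Phi_{j+1}^{Q,R}$ contracts initial-condition errors: there exist $c>0$ and $\rho\in(0,1)$ with $\|\Phi_{k\leftarrow j}(P)-\Phi_{k\leftarrow j}(P')\|\le c\rho^{k-j}\|P-P'\|$ for $P,P'\in\mathcal{K}$. Writing the per-step parametric mismatch $\eta_j\triangleq\Phi_j^{\hat{Q}_j,\hat{R}_j}(\phat_{j-1|j-1})-\Phi_j^{Q,R}(\phat_{j-1|j-1})$, the uniform Lipschitz bound gives $\|\eta_j\|\le C\delta_j$. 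Propagating these injected perturbations through the contracting nominal flow yields the discrete variation-of-constants estimate
\begin{align*}
	\|E_k\|\le c\rho^{k}\|E_0\|+C\sum_{j=1}^{k}c\rho^{k-j}\delta_j.
\end{align*}
Since $\rho<1$ and $\delta_j\to0$, the first term vanishes and the geometric convolution of a null sequence tends to $0$, so $E_k\to0$ whenever $\delta_k\to0$. This deterministic bound is the heart of the argument: it transfers whatever mode of convergence holds for the estimates directly to $E_k$.

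Finally I would supply the probabilistic conclusion. The events $A_k=\{\hat{Q}^{RLS}_i\in\Pn,\ \hat{R}^{RLS}_i\in\Pp\ \forall i>k\}$ of Proposition~\ref{prop:alwaysSPD} are increasing with $\Pr\{A_k\}\ktendsto1$, so $\Pr\{\bigcup_k A_k\}=1$; on $\bigcup_k A_k$ the RTR estimates coincide with the RLS estimates from some index onward by Proposition~\ref{prop:uniqueSolution}, and the RLS estimates converge to $(Q,R)$ by~\cite[Theorem 8]{dunik2018design}, so $\delta_k\to0$ on this probability-one event. Combined with the deterministic implication above, this gives $\phat_{k|k}\to\popt_{k|k}$ with probability one, as claimed. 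The main obstacle is the forgetting estimate in the third step: the one-step Riccati map is in general \emph{not} a spectral-norm contraction, so the contraction genuinely appears only over an observability/controllability window of length $m$, and making the composed gain $c\rho^{k-j}$ and the Lipschitz constant $C$ uniform in $k$ requires careful block bookkeeping anchored to the two-sided Gramian bounds of Definitions~\ref{defn:obs} and~\ref{defn:con}.
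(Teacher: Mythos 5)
Your route is genuinely different from the paper's. You compare $\phat_k$ to $\popt_k$ directly, treating the adaptive recursion as the true-parameter Riccati flow under a vanishing parametric perturbation, and you lean on a uniform exponential-forgetting (Lipschitz contraction) property of that flow. The paper instead inserts an intermediate sequence $P_k$, propagated with the \emph{same} adaptive gain $\khat_k$ but the true $(Q,R)$, so that the difference obeys the exact \emph{linear} recursion $\phat_{k+1}-P_{k+1}=(F_k-\khat_k H_k)(\phat_k-P_k)(F_k-\khat_k H_k)^T+\khat_k(\hat{R}_k-R)\khat_k^T+(\hat{Q}_k-Q)$; that step needs only exponential stability of $F_k-\khat_k H_k$ (Proposition~\ref{prop:expstab}), not any contraction of the nonlinear Riccati map. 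Your contraction lemma is in fact true under Assumption~\ref{as:obscon} on a compact set of initial conditions: it follows from the cross-gain identity $P^1_{k+1}-P^2_{k+1}=(F_k-K^1_kH_k)(P^1_k-P^2_k)(F_k-K^2_kH_k)^T$ together with uniform exponential stability of both closed loops. But your justification for it, that it is ``precisely'' what exponential stability of the filter means, is not correct as stated; it is a separate fact your proof must establish, and you flag it as the main obstacle without closing it. (The paper is not entirely free of this either: its second step, $P_k$ versus $\popt_k$, silently invokes the attractivity/uniqueness of the Riccati limit via~\cite[Theorem 7.5]{jazwinski2007stochastic}.)

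The genuine gap is your final probabilistic step. You claim that on the probability-one event $\bigcup_k A_k$ the errors $\delta_k\to 0$ pointwise, citing~\cite[Theorem 8]{dunik2018design}. That theorem, as used in Proposition~\ref{prop:alwaysSPD} and Theorem~\ref{thm:cov}, gives convergence of the RLS estimates in the \emph{mean-square} sense, hence in probability, but not almost surely; convergence in probability does not furnish a probability-one set on which $\delta_k\to 0$, so the chain ``$\delta_k\to 0$ a.s.\ $\Rightarrow E_k\to 0$ a.s.'' is unsupported by the results available in the paper. What your deterministic convolution bound does legitimately deliver is the transfer of \emph{in-probability} convergence: if the $\delta_j$ are uniformly bounded and $\delta_j\ptendsto 0$, then splitting $\sum_{j\le k}\rho^{k-j}\delta_j$ into the terms with $j\le k-L$ (geometrically small) and the last $L$ terms (each small in probability) yields $E_k\ptendsto 0$. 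That recovers exactly what the paper's own argument establishes, since its proof also concludes only $\phat_k\ptendsto\popt_k$ notwithstanding the ``with probability $1$'' wording of the statement; neither your argument nor the paper's yields almost-sure convergence without an almost-sure convergence result for the covariance estimates themselves.
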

\begin{proof}
	Consider three covariance sequences $\phat_k$, $P_k$ and $\popt_k$ given by~\cite{moghe2019adaptive}
	\begin{align}
		\phat_{k+1} &= \hat{\bar{F}}_k \hat{P}_k \hat{\bar{F}}_k^T + \khat_k \hat{R}_k \khat_k^T + \hat{Q}_k \\
		P_{k+1} &= \hat{\bar{F}}_k P_k \hat{\bar{F}}_k^T + \khat_k R \khat_k^T + Q \\
		\popt_{k+1} &= \bar{F}_k \popt_k \bar{F}_k^T + K_k R K_k^T + Q
	\end{align}
	wherein, $\bar{F}_k = F_k - \khat_k H_k$, $\hat{\bar{F}}_k = F_k - \khat_k H_k$,
	\begin{align*}
		\hat{K}_k = F_k \hat{P}_k H_k^T(H_k \hat{P}_k H_k^T + \hat{R}_k)^{-1} \\
		K_k = F_k \popt_k H_k^T(H_k \popt_k H_k^T + R)^{-1}
	\end{align*} 
	Each of the three sequences denotes the one-step predictor state covariance
	matrix. The sequence $\phat_k$ denotes the apparent state error covariance
	matrix of the adaptive Kalman filter and uses the noise covariance matrix
	estimates for its propagation. The sequence $P_k$ denotes the actual state
	error covariance matrix of the adaptive Kalman filter and uses the Kalman
	gain from the apparent covariance sequence along with the true noise covariance
	matrices. The sequence $\popt_k$ denotes the optimal state error covariance
	which represents the case when $Q$ and $R$ are fully known. We will first
	prove the equivalence of $\phat_k$ and $P_k$ in the limit. Assuming the
	same error covariance at the initial time, the sequence formed by
	differencing $\phat_k$ and $P_k$ is given by
	\begin{align*}
		\phat_{k+1} - P_{k+1} = (F_K - \khat_k H_k) (\phat_k - P_k) (F_K - \khat_k H_k)^T + \khat_k (\hat{R}_k - R) \khat_k^T + (\hat{Q}_k - Q)
	\end{align*}
	Since the RTR method ensures that $\hat{Q}_k$ and $\hat{R}_k$ are SPD, we conclude that $F_k - \khat_k H_k$ is
	exponentially stable from Proposition~\ref{prop:expstab}. Additionally, from Theorem~\ref{thm:cov}, both
	$\hat{Q}_k$ and $\hat{R}_k$ converge in probability to $Q$ and $R$
	respectively as $k\to\infty$. Hence, the exponentially stable matrix sequence converges to
	zero in probability, i.e., $\phat_k \ptendsto P_k$. From the expression for
	$\khat_k$, since $\phat_k \ptendsto P_k$ and $\hat{R}_k \ptendsto R$, we
	get
	\[
	\khat_k \ptendsto F_k P_k H_k^T(H_k P_k H_k^T + R)^{-1}
	\]
	as $k\to\infty$. Hence, the matrix sequence for $P_k$ and $\popt_k$ is
	identical in the limit as $k\to\infty$ with probability $1$. Invoking
	Proposition~\ref{prop:expstab}, the state transition matrix $F_k - K_k H_k$
	is exponentially stable. The matrix sequence $\popt_k$ has a unique
	limit~\cite[Theorem 7.5]{jazwinski2007stochastic}. Hence, $\phat_k\ptendsto
	\popt_k$ as $k\to\infty$. Finally, we conclude that $\phat_k \ptendsto
	\popt_k$ as $k\to\infty$.

\end{proof}

\section{Numerical Simulations}\label{sec:sim}
A LTV system is simulated in this section to demonstrate the RTR based adaptive
Kalman filter. The dynamics of the system given in Eq.~\eqref{eq:sys} with the
system matrices given below~\cite{dunik2018design}.
\begin{align}
	F_k = \begin{bmatrix} 0 & 1 \\ -a_k b_k & -(a_k + b_k) \end{bmatrix}\quad G_k = \begin{bmatrix} 1 \\ 1 \end{bmatrix}\quad H_k = \begin{bmatrix} 1 & d_k \end{bmatrix}
\end{align}
wherein, $\{a_k,b_k\}=c_k \pm i(0.4+0.2\sin(2\pi k/\tau)$, $c_k =
-0.7+0.2\cos(2\pi k/\tau)$, $d_k = 2sin(10\pi k/\tau)$, and $i$ is the
imaginary unit. The measurements are assumed to be available every $1/\tau$
second with $\tau = 10^4$. The true noise covariance matrices are given by
$Q=\begin{bmatrix} 3 & 1 \\ 1 & 2 \end{bmatrix}$ and $R = 2$. For the
purposes of the simulation, the control inputs $u_k$ are assumed to be
drawn from a unit normal distribution. The number of measurements stacked
at every time step are $m=3$. Fig.~\ref{fig:qnorm} shows the Frobenius norm
of the error in estimating the $Q$ matrix with the RTR and the RLS method.
The estimates from both methods are shown to converge to zero. A
similar trend is seen in the estimation error Frobenius norms in estimating
$R$ in Fig.~\ref{fig:rnorm} and the error between the state covariances of
the adaptive Kalman filter and the optimal Kalman filter with known noise
covariance matrices shown in Fig.~\ref{fig:pnorm}. The difference between
the two methods is seen when comparing the transient $\hat{Q}_k$
eigenvalues shown in Fig.~\ref{fig:lambdaQ}. The RLS method sometimes leads
to a negative eigenvalue while RTR method lower bounds the eigenvalue by a
prescribed minimum value of $0.1$ (Remark~\ref{rm:lmin}). Since the $R$
matrix is a scalar, A similar trend is seen in Fig.~\ref{fig:lambdaR} that
shows the time history of its estimate $\hat{R}_k$. The result of negative
eigenvalues of the noise covariance matrix estimates culminates as an
inconsistent non positive definite error state covariance $\phat_{k+1|k}$
shown in Fig.~\ref{fig:lambdaP}.

\begin{figure}[!htpb]
	\centering
	\includegraphics{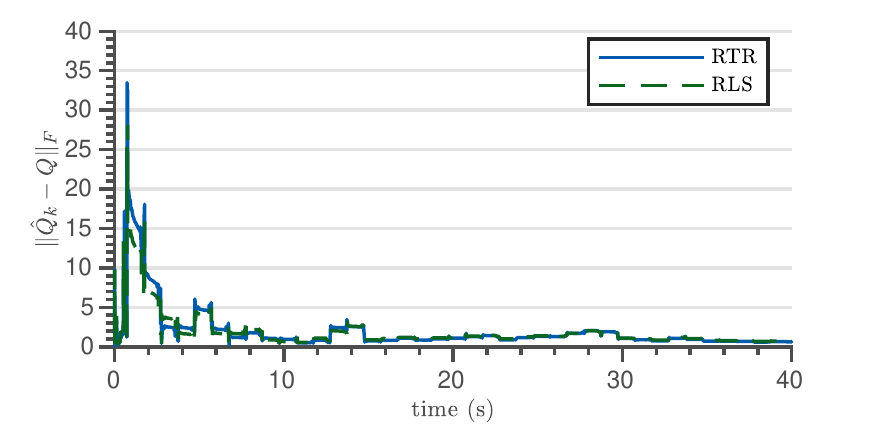}
	\caption{The Frobenius norm of the $Q$ estimation error.}
	\label{fig:qnorm}
\end{figure}

\begin{figure}[!htpb]
	\centering
	\includegraphics{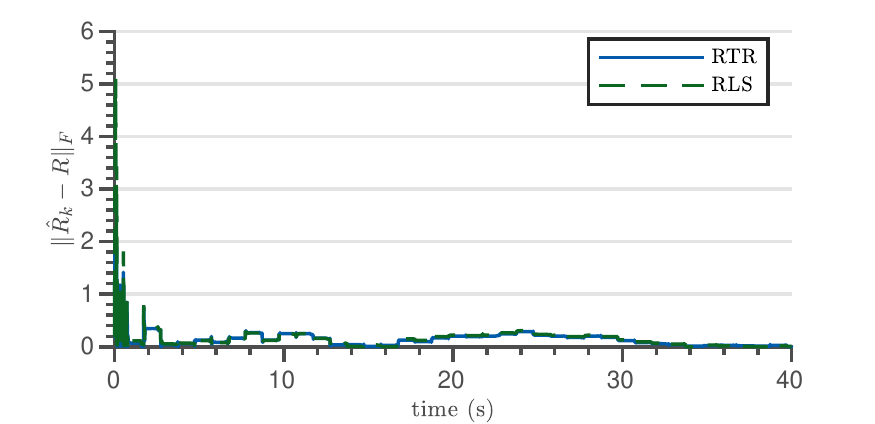}
	\caption{The Frobenius norm of the $R$ estimation error.}
	\label{fig:rnorm}
\end{figure}

\begin{figure}[!htpb]
	\centering
	\includegraphics{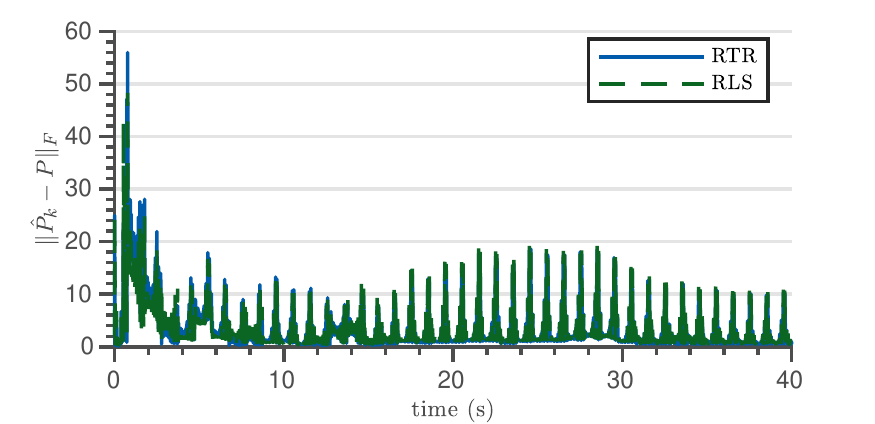}
	\caption{The Frobenius norm of the estimation error in the estimated state
	error covariance matrix $\phat_{k|k=1}$ and the optimal $\popt_{k|k-1}$.}
	\label{fig:pnorm}
\end{figure}

\begin{figure}[!htpb]
	\centering
	\includegraphics{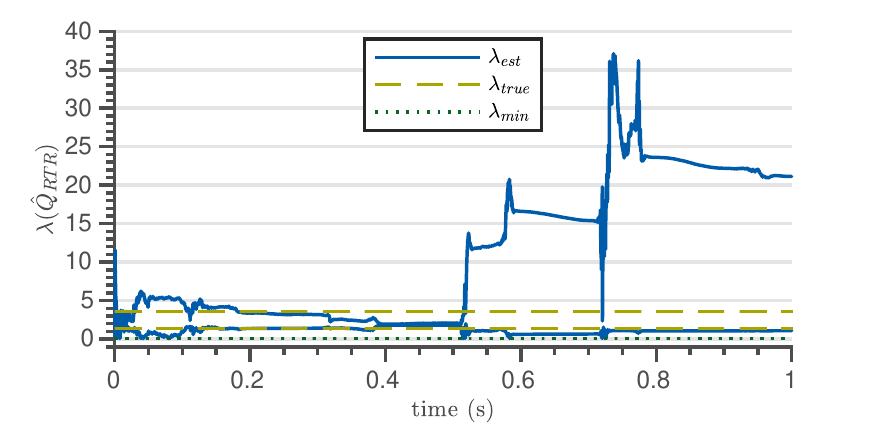}
	\includegraphics{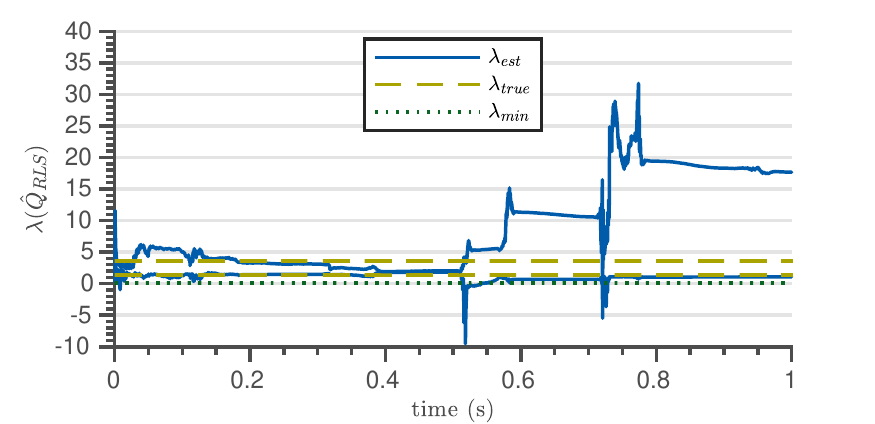}
	\caption{The transient eigenvalues of $\hat{Q}_k$, the true eigenvalues of $Q$, and $\lambda_{min}=0.1$ from Remark~\ref{rm:lmin}.}
	\label{fig:lambdaQ}
\end{figure}

\begin{figure}[!htpb]
	\centering
	\includegraphics{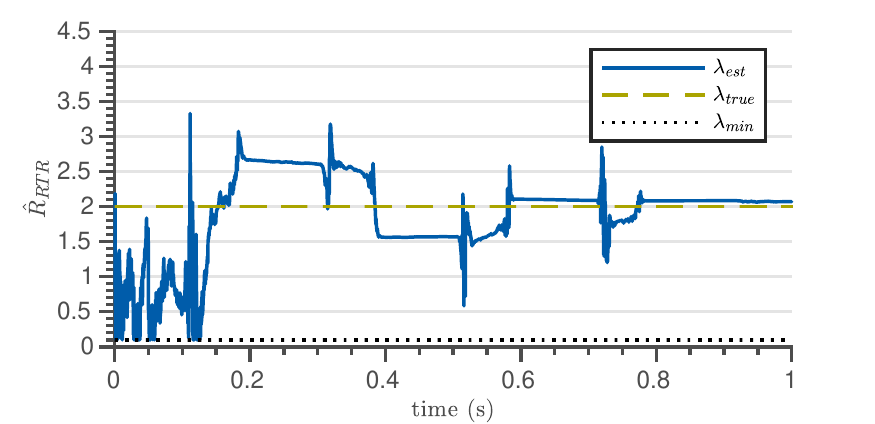}
	\includegraphics{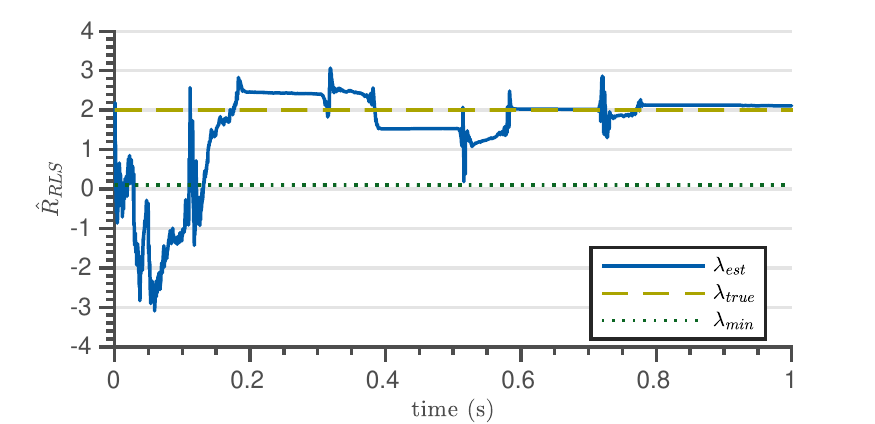}
	\caption{The transient values of $\hat{R}_k$, the true $R$, and $\lambda_{min}=0.1$ from Remark~\ref{rm:lmin}.}
	\label{fig:lambdaR}
\end{figure}

\begin{figure}[!htpb]
	\centering
	\includegraphics{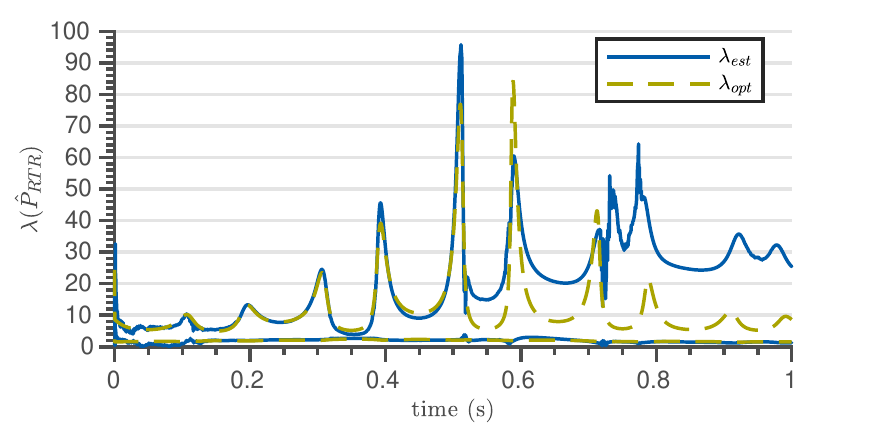}
	\includegraphics{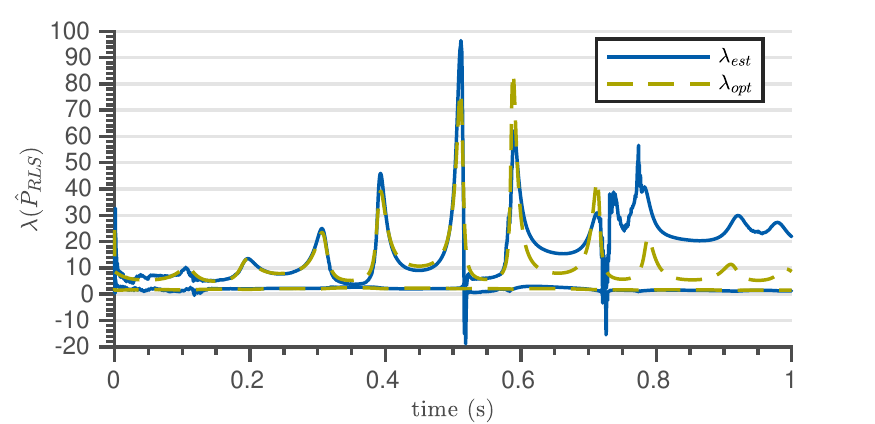}
	\caption{The transient eigenvalues of predicted state error covariance for the RTR-based adaptive Kalman filter.}
	\label{fig:lambdaP}
\end{figure}

\section{Conclusion}\label{sec:conc}

A Riemannian trust-region (RTR) based adaptive Kalman filter to estimate the
states as well as the process and measurement noise covariance matrices (CM)
for an discrete observable linear time-varying system is presented in this
paper. Rigorous convergence guarantees are provided for the noise CM estimate
as well as the state error CM of the adaptive Kalman filter formulated using
the noise CM estimates. To ensure convergence, the only assumptions required in
this formulation are uniform observability, uniform controllability, and
sufficient excitation of the system matrices. In fact, if the system matrices
are not sufficiently excited, the adaptive Kalman filter is still exponentially
stable by virtue of positive definite noise CM estimates.  The results provided
in this paper can be extended to handle detectable systems by using a state
transformation to reveal and ignore the unobservable subspace of the system.
The convergence of the RTR optimization method to the optimal value is
theoretically fast and confirmed through numerical simulations because of
the quadratic nature of the cost function. However, the RTR method is an
iterative procedure and may be computationally expensive for problems of
larger dimensionality. Formulating non-iterative schemes to ensure positive
definite estimates presents a potential direction of future research.

\bibliographystyle{unsrt}  
\bibliography{root}  

\end{document}